\newcommand{\triple}{{\vert\kern-0.25ex\vert\kern-0.25ex\vert}}
\theoremstyle{plain}
\newtheorem{definition}{Definition}[section]
\newtheorem{theorem}[definition]{Theorem}
\newtheorem{lemma}[definition]{Lemma}
\newtheorem{assumption}[definition]{Assumption}
\theoremstyle{definition}
\renewcommand\labelenumi{(\roman{enumi})}
\renewcommand\theenumi\labelenumi
\begin{document}
\title{\bf Numerical approximation of hybrid Poisson-jump Ait-Sahalia-type interest rate model with delay }
\author
{{\bf Emmanuel Coffie \footnote{Corresponding author, Email: emmanuel.coffie@strath.ac.uk}}
\\[0.2cm]
Department of Mathematics and Statistics,
 \\[0.2cm]
University of Strathclyde,  Glasgow G1 1XH, U.K.}
\date{}
\maketitle
\begin{abstract}
While the original Ait-Sahalia interest rate model has been found considerable use as a model for describing time series evolution of interest rates, it may not possess adequate specifications to explain responses of interest rates to empirical phenomena such as volatility 'skews' and 'smiles', jump behaviour, market regulatory lapses, economic crisis, financial clashes, political instability, among others collectively. The aim of this paper is to propose a modified version of this model by incorporating additional features to collectively describe these empirical phenomena adequately. Moreover, due to lack of a closed-form solution to the proposed model, we employ several new truncated EM techniques to examine this model and justify the scheme within Monte Carlo framework to compute expected payoffs of some financial quantities such as a bond and a barrier option.

 \medskip \noindent
{\small\bf Key words}: Stochastic interest rate model, Markovian switching, delay volatility, Poisson jump, truncated EM method, strong convergence, Monte Carlo scheme, financial products
\end{abstract}
\section{Introduction}
The shortcoming of the continuous-time model of Black-Scholes \cite{blackshole} in describing convex phenomena of implied volatility exhibited by most historical financial data led to the underlying assumption of constant volatility to be questioned. Several empirical studies have rather shown that stochastic volatility models with inherent features of past dependency are suitable models for describing convex phenomena of implied volatility against market anomalies (see, e.g., \cite{mao3, wu3, cev,ratanov}).
\par
It has also been well known that asset prices admit jumps in response to lack of information or unexpected catastrophic news.  This phenomenon typically generates price vibrations with larger quantiles than normal (see \cite{espen}). Apparently, this violates the efficient market hypothesis that all available information are reflected in current asset prices. There are several existing rich literature where the authors employed jump-diffusions models to describe jump behaviour of asset prices arising from lack of information or unexpected catastrophic news (see, e.g.,  \cite{jmerton,linyeh, kou, wu2}).
\par
Hybrid models driven by finite-state Markovian chains have also been increasingly employed as suitable models for modelling uncertainty in modern economic or financial systems (see, e.g.,\cite{Elliott, hamilton, ratanov, bollen}). The hybrid models randomly switch between finite number of regimes in anticipation to unexpected abrupt structural changes in underlying economic or financial mechanisms.
\par
The Ait-Sahalia interest rate model which is popularly used to describe time-series evolution of interest rates is driven by a strongly nonlinear stochastic differential equation
\begin{equation}\label{eq:1}
 dx(t)=(\alpha_{-1}x(t)^{-1}-\alpha_{0}+\alpha_{1}x(t)-\alpha_{2}x(t)^{2})dt+\sigma x(t)^{\theta}dB(t),
\end{equation}
where $\alpha_{-1},\alpha_{0},\alpha_{1}, \alpha_{2}>0$ and $\theta >1$. For more extensive existing literature concerning with SDE \eqref{eq:1}, the readers, for instance, may consult \cite{aitsahalia}, \cite{cheng}, \cite{Szpruch} and \cite{dung} among others. 
\par
Despite of the wide applicability of \textup{SDE} \eqref{eq:1}, this model may not possess inherent features to fully describe dynamical behaviours of interest rates in response to unexpected joint effects of extreme volatility, jumps, financial clashes, economic crisis among others. To help describe joint effects of these phenomena, we may specify \textup{SDE}\eqref{eq:1} as a hybrid Poisson-driven jump SDDE governed by
\begin{align}\label{eq:2}
\begin{cases}
  dx(t)=(\alpha_{-1}(r(t))x(t^-)^{-1}-\alpha_{0}(r(t))+\alpha_{1}(r(t))x(t^-)-\alpha_{2}(r(t))x(t^-)^{\rho})dt \\[0.2cm]
     \hspace*{2cm}+\varphi(x((t-\tau)^-),r(t))x(t^-)^{\theta}dB(t)+\alpha_{3}(r(t))x(t^-)dN(t), \hspace*{1cm} \text{$t\geq 0$},  \\[0.0005cm]
  x(t)=\xi(t), r(0)=r_0, \hspace*{0.2cm} \text{$t\in[-\tau, 0]$}.
\end{cases}
\end{align}
\\
Here $\rho, \theta >1$, $r(\cdot)$ is a Markov chain with finite space $\mathcal{S}=\{1,2,..,N\}$, $\alpha_{-1},\alpha_{0},\alpha_{1}, \alpha_{2}$ and $\alpha_{3}$ are functions of $r(\cdot)$, $\varphi(\cdot,\cdot)$ depends on $r(\cdot)$ and $x(t-\tau)$, $\tau>0$ and $x(t-\tau)$ denotes delay in $x(t)$. Moreover, $x(t^-)=\lim_{s\rightarrow t^-}x(s)$, $N(t)$ is a scalar Poisson process independent of a scalar  Brownian motion $B(t)$, with compensated Poisson process given by $\widetilde{N}(t)=N(t)-\lambda t$, where $\lambda$ is a jump intensity.
\par
The SDDE \eqref{eq:2} integrates three unique specifications under a unified framework. For instance, the delay in volatility function may capture the dynamical behaviours of implied volatility. On the other hand, the Poisson-driven term may explain tail distribution of interest rates in response to unexpected catastrophic news. The Markovian switching term may address effects of unpredictable market shocks which may arise from abrupt changes such as regulatory lapses, financial clashes, economic crisis, political instability or unobservable states of the underlying market frameworks or mechanisms.
\par
The solution to SDDE \eqref{eq:2} obviously cannot be found by closed-form formula. It is also obvious SDDE \eqref{eq:2} has  super-linear coefficient terms. As a result, we cannot employ the classical global Lipschitz-based techniques for numerical analysis of SDDE \eqref{eq:2}. To the best of our knowledge, there exists no relevant literature devoted to numerical analysis of system of SDDE \eqref{eq:2} in the strong sense. This therefore calls for a need to investigate feasibility of \textup{SDDE} \eqref{eq:2} from viewpoint of applications.
\par
In this work, we  will focus on developing several new truncated EM techniques to numerically study SDDE \eqref{eq:2}. The rest of the paper is organised as follows: In section 2, we  will examine the existence and uniqueness of the solution to SDDE \eqref{eq:2} and show that the solution will always be positive. We will also establish moment bounds of the exact solution in this section. In section 3, we will define the truncated EM scheme for \textup{SDDE} \eqref{eq:2} and survey moment bounds of the numerical solutions. We will employ truncated EM techniques to establish finite time strong convergence theory in section 4. in section 5, we will also implement some numerical examples to validate efficiency of the proposed scheme.  Finally, in the last section, we will justify the convergence result within a Monte Carlo scheme to value some financial products such a bond and a path-dependent barrier option.
\section{ Mathematical preliminaries}
\noindent Throughout this paper unless otherwise specified, we let $\{\Omega,\mathcal{F}, \{\mathcal{F}_t\}_{t\geq 0}, \mathbb{P} \}$ be a complete probability space with filtration $\{\mathcal{F}_t\}_{t\geq 0}$ satisfying the usual conditions (i.e, it is increasing and right continuous while $\mathcal{F}_0$ contains all $\mathbb{P}$-null sets). If $x, y$ are real numbers, then we denote $x\vee y=\max(x, y)$ and $x\wedge y=\min(x,y)$. For $\tau >0$, $C([-\tau,0];(0,\infty))$ denotes the space of all continuous functions $\xi: [-\tau,0]\rightarrow (0,\infty)$ with the norm $\|\xi\|=\sup_{-\tau\leq u\leq 0}\xi(u)$. Also let $\mathbb{R}_+=(0,\infty)$ and $C(\mathbb{R_+};\mathbb{R_+})$ denote the space of all nonnegative continuous functions defined on $\mathbb{R}_+$. Moreover, let $\emptyset$ denote the empty set so that $\text{inf } \emptyset=\infty$. For a set $A$, denote its indication function by $1_A$. For $t\geq 0$, let $B(t)$ be a scalar Brownian motion and $N(t)$ be a scalar Poisson process with jump intensity $\lambda$ which is independent of the Brownian motion, defined on the above probability space. Also let $r(t), t\geq 0$, be a right-continuous Markov chain defined on the above probability space taking values in a finite state space $\mathcal{S}=\{1,2,....,N\}$ with the generator $\Gamma=(\gamma_{ij})_{N\times N}$ given by
\begin{equation}\label{eq:3}
\mathbb{P}\{ r(t+\delta)=j|r(t)=i\}=
\begin{cases}
\gamma_{ij}\delta+o(\delta)& \mbox{if $i\neq j$},  \\
1+\gamma_{ij}\delta+o(\delta)& \mbox{if $i=j$},
\end{cases}
\end{equation}
where $\delta>0$. Here $\gamma_{ij}\geq 0$ is the transition rate from $i$ to $j$ if $j\neq i$ while
\begin{equation}\label{3*}
 \gamma_{ii}=-\sum_{i\neq j}\gamma_{ij}.
\end{equation}
We assume that the Markov chain  $r(\cdot)$ is ${\cal F}_t$-adapted but independent of the Brownian motion $B(\cdot)$ and Poisson process $N(\cdot)$. It is well known that almost every sample path of $r(\cdot)$ is a right-continuous step function with finite number of simple jumps in finite subinterval of $[0,\infty)$. Consider the following scalar dynamics as equation of \textup{SDDE} \eqref{eq:2}
\begin{align}\label{eq:4}
dx(t)&=f(x(t^-),r(t))dt+\varphi(x((t-\tau)^-)),r(t))g(x(t^-))dB(t)+h(x(t^-),r(t))dN(t),
\end{align}
such that $f(x,i)=\alpha_{-1}(i)x^{-1}-\alpha_{0}(i)+\alpha_{1}(i)x-\alpha_{2}(i)x^{\rho}$, $g(x)=x^{\theta}$, $h(x,i)=\alpha_{3}(i)x$, $ \forall x\in \mathbb{R}_+$ and $i\in S$, where $\varphi(\cdot,\cdot)\in C(\mathbb{R}_+\times \mathcal{S};\mathbb{R}_+)$. For each Lyapunov function $H\in C^{2,1}(\mathbb{R}\times \mathbb{R}_+\times \mathcal{S};\mathbb{R})$, define  the jump-diffusion operator $\mathcal{L}H:\mathbb{R}\times \mathbb{R}\times \mathbb{R}_+\times \mathcal{S}\rightarrow \mathbb{R}$ by
\begin{equation}\label{eq:5}
\mathcal{L}H(x,y,t,i)=\mathcal{I}H(x,y,t,i)+\lambda(H(x+h(x),t,i)-H(x,t,i))+\sum_{j=1}^{N}\gamma_{ij}H(x,t,j),
\end{equation}
where $\mathcal{I} H:\mathbb{R}\times \mathbb{R}\times \mathbb{R}_+\times \mathcal{S}\rightarrow\mathbb{R}$ is the diffusion operator defined by
\begin{equation}\label{eq:6}
\mathcal{I}H(x,y,t,i)=H_t(x,t,i)+H_x(x,t,i)f(x)+\frac{1}{2}H_{xx}(x,t,i)\varphi(y,i)^2g(x)^2,
\end{equation}
with $H_t(x,t,i)=\frac{\partial H(x,t,i)}{\partial t}$, $H_x(x,t,i)=\frac{\partial H(x,t,i)}{\partial x}$ and $H_{xx}(x,t,i)=\frac{\partial^2 H(x,t,i)}{\partial x^2}$. Given the jump-diffusion operator, we could deduce the generalised It\^{o} formula as
\begin{align}\label{eq:7}
dH(x(t),t,r(t))&=\mathcal{L}H(x(t^-),x((t-\tau)^-),t,r(t))dt\nonumber\\
&+H_x(x(t^-),t,r(t))\varphi(x((t-\tau)^-),r(t))g(x(t^-))dB(t)\nonumber\\
&+(H(x(t^-))+h(x(t^-)),t,r(t))-H(x(t^-),t,r(t)))d\tilde{N}(t)\nonumber\\
&+\int_{\mathbb{R}}(H(x(t^-),t,i_0+q(x(t^-),z))-H(x(t^-),t,r(t))))M(dt,dz), \quad \text{a.s}.
\end{align}
Consult \cite{{yumao}} and the references therein regarding the function $q(\cdot)$ and the martingale measure $M(\cdot,\cdot)$. We impose the following standing hypotheses which will be recalled later.
\begin{assumption}\label{eq:a1}
The volatility function $\varphi:\mathbb{R_+}\times S\rightarrow \mathbb{R_+}$ of \textup{SDDE} \eqref{eq:4} is Borel-measurable and bounded by a positive constant, that is
\begin{equation}\label{eq:8}
  \varphi(y,i)\leq \sigma,
\end{equation}
$\forall y\in \mathbb{R}_+ $ and $i\in \mathcal{S}$.
\end{assumption}
\begin{assumption}\label{eq:a3*}
For any $R>0$, there exists a constant $L_R> 0$ such that the volatility function $\varphi(\cdot,\cdot)$ of \textup{SDDE} \eqref{eq:4} satisfies
\begin{equation}\label{eq:13}
|\varphi(y,i)-\varphi(\bar{y},i)| \leq L_R |y-\bar{y}|,\quad
\end{equation}
$\forall (y,\bar{y})\in [1/\mathbb{R},\mathbb{R}]$ and $i\in \mathcal{S}$.
\end{assumption}
\begin{assumption}\label{eq:a2}
The parameters of \textup{SDDE} \eqref{eq:4} obey
\begin{equation}\label{eq:9}
1+ \rho> 2\theta,\quad \rho, \theta >1.
\end{equation}
\end{assumption}
\section{Analytical properties}
In this section, we study the existence of pathwise uniqueness and boundness of moments of the exact solution to  \textup{SDDE}  \eqref{eq:4}.
\subsection{Global positive solution}
One basic requirement of a financial model is the existence of a pathwise unique positive solution. The following lemma therefore reveals this requirement.
\begin{lemma}\label{eq:L0}
Let Assumptions \ref{eq:a1} and \ref{eq:a2} hold. Then for any given initial data
\begin{equation}\label{eq:10}
\{ x(t): -\tau\leq t \leq 0\}=\xi(t) \in C([-\tau,0]:\mathbb{R}_+), \quad r_0 \in \mathcal{S},
\end{equation}
there exists a unique global solution $x(t)$ to \textup{SDDE}  \eqref{eq:4} on $t\ge -\tau$ and $x(t)>0$ a.s.
\end{lemma}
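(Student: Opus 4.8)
The plan is to run the classical Khasminskii-type truncation argument, adapted to the jump--diffusion--switching setting, using the generalised It\^o formula \eqref{eq:7} together with a Lyapunov function that explodes at both $0$ and $\infty$.

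First I would establish local existence and pathwise uniqueness. On any compact interval $[1/k,k]\subset\mathbb{R}_+$ the drift $f(\cdot,i)$, the diffusion factor $g(x)=x^{\theta}$, and the jump coefficient $h(\cdot,i)=\alpha_3(i)x$ are continuously differentiable, hence locally Lipschitz, while $\varphi(\cdot,i)$ is locally Lipschitz by Assumption \ref{eq:a3*}. Since the delayed argument $x(t-\tau)$ enters only through $\varphi$, the method of steps on $[0,\tau],[\tau,2\tau],\dots$ reduces each step to a jump--SDE with locally Lipschitz coefficients, so standard theory yields a unique maximal local solution on $[-\tau,\tau_e)$, where $\tau_e$ is the explosion time. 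For each large integer $k$ I would set $\tau_k=\inf\{t\in[0,\tau_e):x(t)\notin(1/k,k)\}$, an increasing sequence with limit $\tau_\infty\le\tau_e$; the whole game is to show $\tau_\infty=\infty$ a.s., which simultaneously yields globality and strict positivity.

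Next I would introduce a regime-independent Lyapunov function $V(x)=x^{p}+x^{-q}$ with constants $p,q>0$; because $V$ does not depend on $i$, the switching term $\sum_{j}\gamma_{ij}V(x)$ in \eqref{eq:5} vanishes by \eqref{3*}. Applying $\mathcal{L}$ produces, from the drift, a restoring term $-p\alpha_2(i)x^{p+\rho-1}$ at infinity and $-q\alpha_{-1}(i)x^{-q-2}$ near zero, whereas Assumption \ref{eq:a1} lets me bound the diffusion contribution by $\tfrac12|V_{xx}|\sigma^2 x^{2\theta}$, \emph{uniformly} in the delayed value $x(t-\tau)$. This is exactly where Assumption \ref{eq:a2} is decisive: since $\rho+1>2\theta$, the exponent $p+\rho-1$ of the restoring drift strictly exceeds $p+2\theta-2$ of the diffusion term, so the former dominates as $x\to\infty$; near zero the singular drift $-q\alpha_{-1}(i)x^{-q-2}$ dominates the diffusion term $x^{2\theta-q-2}$ because $\theta>1$. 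The compensated jump contribution $\lambda\bigl(V((1+\alpha_3(i))x)-V(x)\bigr)$ is merely a constant multiple of $x^{p}$ and $x^{-q}$, hence absorbed. Consequently $\mathcal{L}V(x,y,i)\le K$ for some constant $K$ uniform over $x,y\in\mathbb{R}_+$ and $i\in\mathcal{S}$.

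Finally I would apply the generalised It\^o formula \eqref{eq:7} to $V(x(t\wedge\tau_k))$ and take expectations; the $dB$ and $d\widetilde N$ terms are martingales and vanish, leaving $\mathbb{E}V(x(t\wedge\tau_k))\le V(\xi(0))+Kt$. On $\{\tau_k\le t\}$ the stopped solution reaches the boundary $k$ or $1/k$, up to the bounded multiplicative jump factor $1+\alpha_3(i)$, so monotonicity of $V$ away from its minimum gives $V(x(\tau_k))\ge\mu_k:=\min\{V(1/k),V(k)\}\to\infty$; hence $\mathbb{P}(\tau_k\le t)\le(V(\xi(0))+Kt)/\mu_k\to0$. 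Letting $k\to\infty$ forces $\mathbb{P}(\tau_\infty\le t)=0$ for every $t$, so $\tau_\infty=\tau_e=\infty$ a.s.\ and $x(t)\in(0,\infty)$ for all $t\ge0$ a.s. I expect the main obstacle to be the Lyapunov estimate of the previous paragraph, specifically verifying the uniform upper bound on $\mathcal{L}V$ while correctly handling the multiplicative jumps and the delayed diffusion argument; the balance of exponents there is precisely what Assumption \ref{eq:a2} is engineered to guarantee.
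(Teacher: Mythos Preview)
Your proposal is correct and follows the same Khasminskii truncation argument as the paper: local existence via local Lipschitz coefficients, exit-time stopping times $\tau_k$, a Lyapunov function coercive at both $0$ and $\infty$, the bound $\mathcal{L}V\le K$ obtained from Assumption~\ref{eq:a2}, and the It\^o-formula estimate $\mathbb{P}(\tau_k\le t)\to 0$. The only difference is the choice of test function---the paper uses $H(x)=x^{\phi}-1-\phi\log x$ with $\phi\in(0,1]$ instead of your $V(x)=x^{p}+x^{-q}$---but both blow up at the endpoints, yield the same dominant-exponent balance, and make the switching term vanish since they are regime-independent; the arguments are otherwise identical.
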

\begin{proof} Since the coefficient terms of \textup{SDDE}  \eqref{eq:4} are locally Lipschitz continuous in $[-\tau,\infty)$, then there exists a unique positive maximal local solution $x(t)\in [-\tau,\tau_{e})$ for any given initial data \eqref{eq:10}, where $\tau_{e}$ is the explosion time (e.g., see \cite{yumao}). Let  $n_0>0$ be sufficiently large such that
\begin{equation*}\label{eq:11}
  \frac{1}{n_0}<\underset{-\tau\leq t\leq0}{\min}|\xi(t)|\leq\underset{-\tau\leq t\leq 0}{\max}|\xi(t)|<n_0.
\end{equation*}
For each integer $n\geq n_0$, define the stopping time
\begin{equation}\label{eq:12}
\tau_n=\inf\{ t\in [0,\tau_{e}):x(t)\not\in(1/n,n)\}.
\end{equation}
Obviously, $\tau_n$ is increasing  as $n\rightarrow \infty$. Set $\tau_{\infty}=\underset{n\rightarrow \infty}\lim \tau_n$, whence $\tau_{\infty}\leq \tau_e$ a.s. In other words, to complete the proof, we need to show that
\begin{equation*}
\tau_{\infty}=\infty\quad \text{a.s}. 
\end{equation*}
We define a $C^2$-function $H:\mathbb{R}_+\to \mathbb{R}_+$ for some $\phi\in(0,1]$ by
\begin{equation}
H(x)=x^{\phi}-1-\phi\text{log}(x).
\end{equation}
From the operator \eqref{eq:6} and by Assumption \ref{eq:a1}, we obtain 
\begin{align*}
\mathcal{I}H(x,y,t,i)&\le  \alpha_{-1}(i)\phi x^{\phi-2}-\alpha_0(i)\phi x^{\phi-1}+\alpha_1(i)\phi x^{\phi}-\alpha_2(i)\phi x^{\rho+\phi-1}-\alpha_{-1}(i)\phi x^{-2}\\
&+\alpha_0(i)\phi x^{-1}-\alpha_1(i)\phi+\alpha_2(i)\phi x^{\rho-1}+\frac{\sigma^2}{2}\phi(\phi-1)x^{\phi+2\theta-2}+\frac{\sigma^2}{2}\phi x^{2\theta-2}.
\end{align*}
By the Jump-diffusion operator in \eqref{eq:5}, we now have
\begin{align*}
\mathcal{L}H(x,y,t,i)\le \mathcal{I}H(x,y,t,i)+\lambda((1+\alpha_3(i))^{\phi}-1)x^{\phi}-\lambda \phi\log(1+\alpha_3(i)).
\end{align*}
For $\phi\in(0,1]$ and by Assumption \ref{eq:a2}, we observe $-\alpha_{-1}(i)\phi x^{-2}$ dominates and tends to $-\infty$ for small $x$ and for large $x$, $-\alpha_2(i)\phi x^{\rho+\phi-1}$ dominates and tends to $-\infty$. So there exists a constant $\mathcal{K}_0$ such that
\begin{equation*}
\mathcal{L}H(x,y,t,i)\le \mathcal{K}_0.
\end{equation*}
So for any arbitrary $t_1\ge 0$, the It\^{o} formula gives us
\begin{equation*}
\mathbb{E}[H(x(\tau_n\wedge t_1))]\leq H(\xi(0))+\mathcal{K}_0t_1.
\end{equation*}
It then follows
\begin{equation*}
\mathbb{P}(\tau_n\leq t_1)\leq\frac{H(\xi(0))+\mathcal{K}_0t_1}{H(1/n)\wedge H(n)}.
\end{equation*}
This implies $\mathbb{P}(\tau_{\infty}\le t_1)=0$ and consequently, we must have 
\begin{equation*}
\mathbb{P}(\tau_{\infty}=\infty)=1
\end{equation*}
as the required assertion. The proof is thus complete.
\end{proof}
\subsection{Moment boundedness }
\noindent The following lemma shows the moment of the exact solution $x(t)$ to \textup{SDDE}  \eqref{eq:3} is upper bounded.
\begin{lemma}\label{eq:L1}
Let Assumptions \ref{eq:a1} and \ref{eq:a2} hold. Then for any $p>2\vee(\rho-1)$, the solution $x(t)$ to \textup{SDDE}  \eqref{eq:3} satisfies
\begin{equation}\label{eq:13}
\sup_{0\leq t<\infty}\Big(\mathbb{E}|x(t)|^p\Big)\leq c_1
\end{equation}
and consequently
\begin{equation} \label{eq:14*}
\sup_{0\leq t<\infty}\Big(\mathbb{E}|\frac{1}{x(t)}|^p\Big)\leq c_2,
\end{equation}
where $c_1$ and $c_2$ are constants.
\end{lemma}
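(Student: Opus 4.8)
The plan is to apply the generalised It\^o formula \eqref{eq:7} to the Lyapunov function $V(x)=x^p$ and to extract a dissipativity estimate of the form $\mathcal{L}V(x,y,t,i)\le K_1-K_2 x^p$ for positive constants $K_1,K_2$ independent of $t,y$ and $i$; a uniform-in-time moment bound then follows by weighting with an exponential factor. First I would compute $\mathcal{L}(x^p)$ using \eqref{eq:5}--\eqref{eq:6}. The drift contributes $p\alpha_{-1}(i)x^{p-2}-p\alpha_0(i)x^{p-1}+p\alpha_1(i)x^{p}-p\alpha_2(i)x^{p+\rho-1}$ (the first term being harmless since $p>2$), the diffusion contributes $\tfrac12 p(p-1)\varphi(y,i)^2 x^{p+2\theta-2}$, which by Assumption \ref{eq:a1} is bounded above by $\tfrac12 p(p-1)\sigma^2 x^{p+2\theta-2}$, the jump term contributes $\lambda((1+\alpha_3(i))^p-1)x^p$, and the Markovian switching term $\sum_{j}\gamma_{ij}x^p$ vanishes because $V$ is regime-independent and $\sum_j\gamma_{ij}=0$ by \eqref{3*}.

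The crucial step is to show that the negative super-linear drift term $-p\alpha_2(i)x^{p+\rho-1}$ dominates every positive term for large $x$. Comparing it with the diffusion term requires $p+\rho-1>p+2\theta-2$, i.e. $\rho+1>2\theta$, which is precisely Assumption \ref{eq:a2}; this is the key point at which that hypothesis is used. Since in addition $p>2\vee(\rho-1)$ guarantees $p+\rho-1>p$, the term $x^{p+\rho-1}$ also absorbs the $O(x^p)$ jump and linear-drift contributions, yielding constants $K_1,K_2>0$ with $\mathcal{L}(x^p)\le K_1-K_2 x^p$ uniformly in $(y,t,i)$. I note that the delay enters only through $\varphi(x((t-\tau)^-),r(t))$, which is neutralised by the uniform bound $\varphi\le\sigma$ of Assumption \ref{eq:a1}; this is what decouples the estimate from the history of the solution and makes a clean dissipativity inequality possible.

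To convert this into \eqref{eq:13}, I would apply the It\^o formula to $U(x,t)=e^{K_2 t}x^p$, obtaining $\mathcal{L}U\le K_1 e^{K_2 t}$, and then work along the stopping times $\tau_n$ of \eqref{eq:12} so that the local-martingale (Brownian and compensated-Poisson) terms have zero expectation despite the super-linear coefficients. Taking expectations, integrating, and letting $n\to\infty$ using the non-explosion from Lemma \ref{eq:L0} gives $\mathbb{E}[e^{K_2 t}x(t)^p]\le x(0)^p+(K_1/K_2)(e^{K_2 t}-1)$, whence $\mathbb{E}|x(t)|^p\le x(0)^p+K_1/K_2=:c_1$ for all $t\ge 0$.

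For the reciprocal bound \eqref{eq:14*} I would repeat the argument with $V(x)=x^{-p}$ (or, more economically, run the whole scheme once with $V(x)=x^p+x^{-p}$). Here $\mathcal{L}(x^{-p})$ has leading term $-p\alpha_{-1}(i)x^{-p-2}$ coming from the $\alpha_{-1}(i)x^{-1}$ drift, the most singular contribution as $x\to 0^+$; the diffusion contribution $\tfrac12 p(p+1)\sigma^2 x^{2\theta-p-2}$ is strictly less singular because $\theta>1$ forces $2\theta-p-2>-p-2$, while the remaining terms stay bounded for large $x$ since $p>\rho-1$. Thus the same dissipativity inequality $\mathcal{L}(x^{-p})\le K_3-K_4 x^{-p}$ holds and the exponential-weight argument yields $c_2$. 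The main obstacle is twofold: getting the exponent bookkeeping in the dissipativity estimate exactly right so that Assumption \ref{eq:a2} is both used and sufficient, and carrying out the localisation carefully so that the expectation of the martingale terms vanishes before one passes to the limit.
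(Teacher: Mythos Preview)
Your proposal is correct and follows essentially the same route as the paper: an exponential weight combined with a Lyapunov function $x^p$ (respectively $x^{-p}$), the domination of all positive contributions by $-p\alpha_2(i)x^{p+\rho-1}$ (respectively $-p\alpha_{-1}(i)x^{-p-2}$) via Assumption \ref{eq:a2}, localisation along $\tau_n$, and Fatou. The only cosmetic difference is that you first isolate a dissipativity inequality $\mathcal{L}(x^p)\le K_1-K_2x^p$ and then choose the weight $e^{K_2t}$, whereas the paper fixes the weight $e^t$ from the outset (implicitly taking $K_2=1$) and bounds $\mathcal{L}(e^tx^p)\le \mathcal{K}_1 e^t$ directly; both yield the same uniform-in-time bound.
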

\begin{proof}
For every sufficiently large integer $n$, we define the stopping time  by
\begin{equation*}
\tau_n=\inf \{ t\geq 0:x(t)\not\in(1/n,n)\}.
\end{equation*}
We also define a Lyapunov function $H\in C^{2,1}(\mathbb{R}_+\times \mathbb{R}_+;\mathbb{R}_+)$ by $H(x,t)=e^tx^p$ . By Assumption \ref{eq:a1}, we apply \eqref{eq:4} to obtain
\begin{equation*}
 \mathcal{L}H(x,y,t,i) \leq  \mathcal{I} H(x,y,t,i)+\lambda e^tx^p[(1+\alpha_{3}(i))^p-1],
\end{equation*}
where
\begin{equation*}
\mathcal{I} H(x,y,t,i)\leq e^t\Big[x^p+px^{p-2}(\alpha_{-1}(i)-\alpha_{0}(i)x+\alpha_{1}(i)x^2-\alpha_{2}(i)x^{\rho+1}+\frac{(p-1)}{2}\sigma^2x^{2\theta})\Big].
\end{equation*}
Apparently, by Assumption \ref{eq:a2}, $-p\alpha_{2}(i)x^{\rho+p-1}$ dominates and tends to $-\infty$ for large $x$. So there exists a constant $\mathcal{K}_1 $ such that
\begin{equation*}
  \mathcal{L}H(x,y,t,i)\leq \mathcal{K}_1 e^t.
\end{equation*}
By the the It\^{o} formula, we have
\begin{equation*}
\mathbb{E}[e^{t\wedge \tau_n}|x(t\wedge \tau_n)|^p]\leq |\xi(0)|^p+\mathcal{K}_1 e^t.
\end{equation*}
Applying  the Fatou lemma and letting $n\rightarrow \infty$ yields
\begin{equation*}
\mathbb{E}|x(t)|^p<\frac{|\xi(0)|^p+\mathcal{K}_1 e^t}{e^t}
\end{equation*}
and consequently, we obtain \eqref{eq:16} as the required assertion. Moreover, by applying the operator \eqref{eq:4} to the Lyapunov function $H(x,t)=e^t/x^p$, we compute
\begin{equation*}
 \mathcal{L}H(x,y,t,i) \leq  \mathcal{I} H(x,y,t,i)+\lambda e^tx^{-p}[(1+\alpha_{3}(i))^{-p}-1],
\end{equation*}
where  Assumption \ref{eq:a1} has been used and
\begin{equation*}
\mathcal{I} H(x,y,t,i)\leq e^t\Big[x^{-p}-px^{-(p+2)}(\alpha_{-1}(i)-\alpha_{0}(i)x+\alpha_{1}(i)x^2-\alpha_{2}(i)x^{\rho+1}+\frac{(p+1)}{2}\sigma^2x^{2\theta})\Big].
\end{equation*}
For $p>2\vee(\rho-1)$, we note $-\alpha_{-1}(i)px^{-(p+2)}$ dominates and tends to $-\infty$  for small $x$. Moreover, we also note $p\alpha_{2}(i)x^{\rho-p-1}$ dominates and tends to 0 for large $x$. We then find a constant $\mathcal{K}_2$ such that
\begin{equation*}
  \mathcal{L}H(x,y,t,i)\leq \mathcal{K}_2e^t.
\end{equation*}
So from the It\^{o} formula, we can apply the Fatou lemma and let $n\rightarrow \infty$  to arrive at \eqref{eq:14*}.
\end{proof}
\section{Numerical method}
Under this section, we recall the truncated EM method and apply it for convergent approximation of \textup{SDDE} \eqref{eq:4}. To start with, let also impose the following useful condition on the initial data.
\begin{assumption}\label{eq:a3}
There is a pair of constant $\mathcal{K}_3>0$ and $\Upsilon\in (0,1]$ such that for all $ -\tau\leq s \leq t \leq 0$, the initial data $\xi$ satisfies
\begin{equation}\label{eq:15}
  |\xi(t)-\xi(s)|\leq \mathcal{K}_3|t-s|^{\Upsilon}.
\end{equation}
\end{assumption}
We also need the following lemmas (see \cite{emma}).
\begin{lemma}\label{eq:L2}
For any $R>0$, there exists a constant $K_R > 0$ such that the coefficient terms of \textup{SDDE} \eqref{eq:4} satisfy
\begin{equation}\label{eq:16}
|f(x,i)-f(\bar{x},i)|\vee |g(x)-g(\bar{x})|\vee |h(x,i)-h(\bar{x},i)| \leq K_R|x-\bar{x}|, 
\end{equation}
$\forall x,\bar{x} \in [1/\mathbb{R},\mathbb{R}]\text{ and } i\in \mathcal{S}$.
\end{lemma}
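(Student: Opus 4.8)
The plan is to estimate each of the three coefficient functions separately on the compact interval $[1/R, R]$, which is bounded away from both $0$ and $\infty$, and then combine the resulting local Lipschitz constants, using the finiteness of $\mathcal{S}$ to bound the switching coefficients uniformly. Since $\mathcal{S}$ is finite, I would set $a_k := \max_{i \in \mathcal{S}} \alpha_k(i)$ for $k \in \{-1, 1, 2, 3\}$ at the outset, so that every appearance of an $\alpha_k(i)$ can be replaced by a constant independent of $i$.

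First I would dispose of the jump coefficient $h(x,i) = \alpha_3(i)x$, which is linear, giving $|h(x,i) - h(\bar{x},i)| = \alpha_3(i)|x - \bar{x}| \le a_3 |x - \bar{x}|$ with no restriction on the domain needed. Next I would treat the diffusion coefficient $g(x) = x^{\theta}$ by the mean value theorem: for $x, \bar{x} \in [1/R, R]$ there is a point $\zeta$ between them, hence $\zeta \in [1/R, R]$, with $x^{\theta} - \bar{x}^{\theta} = \theta \zeta^{\theta-1}(x - \bar{x})$, so that $|g(x) - g(\bar{x})| \le \theta R^{\theta-1}|x - \bar{x}|$ because $\theta > 1$ and $\zeta \le R$. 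For the drift $f(x,i)$ I would handle its four summands term by term: the constant $-\alpha_0(i)$ cancels in the difference; the linear term contributes $\alpha_1(i)|x - \bar{x}|$; the reciprocal term obeys $|x^{-1} - \bar{x}^{-1}| = |x - \bar{x}|/(x\bar{x}) \le R^2 |x - \bar{x}|$, using $x\bar{x} \ge 1/R^2$; and the superlinear term again yields $|x^{\rho} - \bar{x}^{\rho}| \le \rho R^{\rho-1}|x - \bar{x}|$ by the mean value theorem. Collecting these gives $|f(x,i) - f(\bar{x},i)| \le (a_{-1}R^2 + a_1 + a_2 \rho R^{\rho-1})|x - \bar{x}|$, and I would finish by taking $K_R$ to be the maximum of the three constants so obtained, so that the bound holds for the maximum on the left-hand side.

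There is no genuine obstacle in this argument; the only point worth flagging is that none of the offending terms is globally Lipschitz. The reciprocal term $x^{-1}$ is singular at the origin, while the power terms $x^{\theta}$ and $x^{\rho}$ are superlinear and have unbounded derivatives at infinity, which is exactly why the estimate must be local rather than global. Restricting to $[1/R, R]$ simultaneously keeps $x, \bar{x}$ away from $0$, controlling the reciprocal term through the lower bound $x\bar{x} \ge 1/R^2$, and keeps them below $R$, controlling the derivatives $\theta \zeta^{\theta-1}$ and $\rho \zeta^{\rho-1}$ through the upper bound. Since both difficulties are removed at once by the compactness of the interval, the computation is entirely routine once the domain restriction is in place.
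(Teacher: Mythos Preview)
Your argument is correct and complete. The paper itself does not supply a proof of this lemma; it merely cites \cite{emma} and states the result. Your term-by-term treatment via the mean value theorem and the elementary identity $x^{-1}-\bar{x}^{-1}=(\bar{x}-x)/(x\bar{x})$ is the standard route, and the explicit constant $K_R=\max\{a_{-1}R^2+a_1+a_2\rho R^{\rho-1},\ \theta R^{\theta-1},\ a_3\}$ is exactly what one expects. There is nothing to add.
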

\begin{lemma}\label{eq:L3}
Let Assumptions \ref{eq:a1} and \ref{eq:a2} hold. For any $p\geq 2$, there exists $\mathcal{K}_4=\mathcal{K}_4(p)>0$ such that the coefficient terms of \textup{SDDE}  \eqref{eq:4} satisfy
\begin{equation}\label{eq:14}
xf(x,i)+\frac{p-1}{2}|\varphi(y,i)g(x)|^2\leq \mathcal{K}_4(1+|x|^2),
\end{equation}
$\forall(x,y,i)\in \mathbb{R}_+\times\mathbb{R}_+\times \mathcal{S}$.
\end{lemma}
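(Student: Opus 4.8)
The plan is to reduce the claimed Khasminskii-type inequality to a one-dimensional analysis of a polynomial-type function of $x$, exploiting positivity of the coefficients together with the balance condition of Assumption \ref{eq:a2}. First I would substitute the explicit forms $f(x,i)=\alpha_{-1}(i)x^{-1}-\alpha_{0}(i)+\alpha_{1}(i)x-\alpha_{2}(i)x^{\rho}$ and $g(x)=x^{\theta}$ and expand
\[
xf(x,i)+\frac{p-1}{2}|\varphi(y,i)g(x)|^2 = \alpha_{-1}(i)-\alpha_{0}(i)x+\alpha_{1}(i)x^2-\alpha_{2}(i)x^{\rho+1}+\frac{p-1}{2}\varphi(y,i)^2 x^{2\theta}.
\]
Using Assumption \ref{eq:a1} to replace $\varphi(y,i)^2$ by $\sigma^2$ eliminates the $y$-dependence, and since $x>0$ the term $-\alpha_{0}(i)x$ is non-positive and may simply be discarded.

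Because the state space $\mathcal{S}$ is finite, all the coefficient functions are bounded, so I would set $\hat\alpha_{-1}=\max_{i}\alpha_{-1}(i)$, $\hat\alpha_{1}=\max_{i}\alpha_{1}(i)$ and $\check\alpha_{2}=\min_{i}\alpha_{2}(i)>0$ to obtain the state-independent upper bound
\[
xf(x,i)+\frac{p-1}{2}|\varphi(y,i)g(x)|^2 \le \hat\alpha_{-1}+\hat\alpha_{1} x^2-\check\alpha_{2} x^{\rho+1}+\frac{p-1}{2}\sigma^2 x^{2\theta}.
\]

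The crux is to absorb the two super-linear terms, and this is the only place where the structural hypothesis is genuinely needed. Here I would invoke Assumption \ref{eq:a2}, which yields $2\theta<\rho+1$: the function $\psi(x):=\tfrac{p-1}{2}\sigma^2 x^{2\theta}-\check\alpha_{2} x^{\rho+1}$ is continuous on $(0,\infty)$, tends to $0$ as $x\to0^+$ (since $2\theta>0$) and tends to $-\infty$ as $x\to\infty$ (since the negative contribution carries the strictly larger exponent), so that $C_0:=\sup_{x>0}\psi(x)<\infty$. This domination of the delay-volatility power $x^{2\theta}$ by the drift power $x^{\rho+1}$ is the heart of the argument; everything surrounding it is bookkeeping. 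I expect no real obstacle elsewhere.

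Finally, with $C_0$ in hand the upper bound reads $\le(\hat\alpha_{-1}+C_0)+\hat\alpha_{1} x^2$, whence choosing $\mathcal{K}_4:=(\hat\alpha_{-1}+C_0)\vee\hat\alpha_{1}$ gives $xf(x,i)+\frac{p-1}{2}|\varphi(y,i)g(x)|^2\le\mathcal{K}_4(1+x^2)$ for all $(x,y,i)\in\mathbb{R}_+\times\mathbb{R}_+\times\mathcal{S}$, which is exactly \eqref{eq:14}. I would finish by remarking that $\mathcal{K}_4$ depends on $p$ only through the factor $\tfrac{p-1}{2}$ entering $C_0$, consistent with the stated dependence $\mathcal{K}_4=\mathcal{K}_4(p)$.
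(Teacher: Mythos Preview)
Your argument is correct. Note, however, that the paper does not actually supply its own proof of this lemma: it is stated with the parenthetical ``(see \cite{emma})'' and the proof is deferred to that reference. Your approach---substituting the explicit forms of $f$ and $g$, bounding $\varphi$ by $\sigma$ via Assumption~\ref{eq:a1}, passing to uniform constants over the finite state space $\mathcal{S}$, and then using the balance condition $1+\rho>2\theta$ of Assumption~\ref{eq:a2} to ensure that the negative drift contribution $-\check\alpha_2 x^{\rho+1}$ eventually dominates the diffusion contribution $\tfrac{p-1}{2}\sigma^2 x^{2\theta}$---is exactly the standard route and is what one would expect the cited reference to contain. There is no gap.
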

The truncated EM scheme for SDDE \eqref{eq:4} is now defined in the following subsection.
\subsection{The truncated EM method}
Let extend the volatility function $\varphi(\cdot,\cdot)$ and the jump term $h(\cdot,\cdot)$ from $\mathbb{R}_+$ to $\mathbb{R}$ by setting $\varphi(y,i)=\varphi(0,i)$ and $h(x,i)=0$ for $x<0$. These extensions do not in any way affect above conditions and results. To define the truncated EM scheme for \textup{SDDE}  \eqref{eq:4}, we first choose a strictly increasing continuous function $\mu: \mathbb{R}_+ \rightarrow \mathbb{R}_+$ such that $\mu(r)\rightarrow \infty$ as $r\rightarrow \infty$ and
\begin{equation}\label{eq:17}
\sup_{1/r \leq x\leq r}(|f(x,i)|\vee g(x))\leq \mu(r), \quad \forall r> 1.
\end{equation}
Let $\mu^{-1}$ be the inverse function of $\mu$ and $\psi:(0,1)\rightarrow \mathbb{R}_+$ a strictly decreasing function such that
\begin{equation}\label{eq:18}
\quad \lim_{\Delta \rightarrow 0}\psi(\Delta)=\infty \text{  and  }\Delta^{1/4}\psi(\Delta)\leq 1, \quad \forall \Delta \in (0,1].
\end{equation}
Find $\Delta^*\in (0,1)$ such that $\mu^{-1}(\psi(\Delta^*))>1$ and $f(x,i)>0$ for $0<x<\Delta^*$. For a given step size $\Delta \in (0,\Delta^*)$, let us define the truncated functions
\begin{equation*}
f_{\Delta}(x,i)=f\Big(1/\mu^{-1}(\psi(\Delta))\vee (x\wedge \mu^{-1}(\psi(\Delta))),i\Big), \quad \forall (x,i) \in \mathbb{R}\times \mathcal{S}
\end{equation*}
and
\begin{equation*}
g_{\Delta}(x)=
\begin{cases}
  g\Big(x\wedge \mu^{-1}(\psi(\Delta))\Big), & \mbox{if $x\geq 0$ }\\
  0,                             & \mbox{if $x< 0$}.
\end{cases}
\end{equation*}
Then for $x\in [1/\mu^{-1}(\psi(\Delta)),  \mu^{-1}(\psi(\Delta))]$, we get
\begin{align*}
|f_{\Delta}(x,i)|&=|f(x,i)|\leq \underset{1/\mu^{-1}(\psi(\Delta))\leq z \leq\mu^{-1}(\psi(\Delta))}{\max |f(z,i)|}\\
             &\leq \mu(\mu^{-1}(\psi(\Delta)))= \psi(\Delta)
\end{align*}
and
\begin{align*}
g_{\Delta}(x)\leq \mu(\mu^{-1}(\psi(\Delta)))=\psi(\Delta).
\end{align*}
We easily see that
\begin{equation}\label{eq:19}
|f_{\Delta}(x,i)|\vee g_{\Delta}(x)\leq \psi(\Delta), \quad \forall(x,i) \in \mathbb{R}\times \mathcal{S}.
\end{equation}\noindent
The following lemma confirms $f_{\Delta}$ and $g_{\Delta}$ nicely reproduce \eqref{eq:14}. 
\begin{lemma}\label{eq:L4}
Let Assumption \ref{eq:a1} and \ref{eq:a2} hold. Then, for all $\Delta \in (0,\Delta^*)$ and $p\geq 2$, the truncated functions satisfy
\begin{equation}\label{eq:20}
xf_{\Delta}(x,i)+\frac{p-1}{2}|\varphi(y,i)g_{\Delta}(x)|^2\leq \mathcal{K}_5(1+|x|^2)
\end{equation}
$\forall (x,y,i)\in \mathbb{R}\times \mathbb{R}\times \mathcal{S}$, where $\mathcal{K}_5$ is independent of $\Delta$. Consult \cite{emma}. 
\end{lemma}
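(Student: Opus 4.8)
The plan is to reduce the truncated inequality \eqref{eq:20} to the untruncated Khasminskii-type bound \eqref{eq:14} of Lemma \ref{eq:L3}, evaluated at the projected argument. Write $\beta=\mu^{-1}(\psi(\Delta))>1$ and, for $x\in\mathbb{R}$, set $u=(1/\beta)\vee(x\wedge\beta)\in[1/\beta,\beta]$, so that by definition $f_{\Delta}(x,i)=f(u,i)$. A short inspection of the three regimes $x<0$, $0\le x\le\beta$ and $x>\beta$ shows that the upper-clipped diffusion factor always obeys $0\le g_{\Delta}(x)\le g(u)$; combined with $\varphi(y,i)\ge 0$ from Assumption \ref{eq:a1} this yields $|\varphi(y,i)g_{\Delta}(x)|\le|\varphi(y,i)g(u)|$. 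Hence it suffices to control $xf(u,i)+\frac{p-1}{2}|\varphi(y,i)g(u)|^2$ by $\mathcal{K}_5(1+x^2)$ with $\mathcal{K}_5$ free of $\Delta$.

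The key algebraic step I would use is the splitting $xf(u,i)=uf(u,i)+(x-u)f(u,i)$. Feeding the first summand together with the diffusion term into Lemma \ref{eq:L3} at the point $(u,y,i)\in\mathbb{R}_+\times\mathbb{R}_+\times\mathcal{S}$ gives $uf(u,i)+\frac{p-1}{2}|\varphi(y,i)g(u)|^2\le\mathcal{K}_4(1+u^2)$, and since one checks $u^2\le 1+x^2$ in every regime, this contributes a clean $2\mathcal{K}_4(1+x^2)$. Everything therefore hinges on bounding the correction term $(x-u)f(u,i)$ by $C(1+x^2)$ with $C$ independent of $\Delta$.

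The argument finishes with a case analysis on $u$. When $x\in[1/\beta,\beta]$ we have $u=x$ and the correction vanishes. When $x>\beta$ we have $u=\beta$ and the correction is $(x-\beta)f(\beta,i)$, which is $\le 0$ whenever $f(\beta,i)\le 0$; symmetrically, when $x<1/\beta$ (this also covers $x<0$) we have $u=1/\beta$ and the correction $(x-1/\beta)f(1/\beta,i)$ is $\le 0$ whenever $f(1/\beta,i)\ge 0$. The delicate cases are thus $f(\beta,i)>0$ and $f(1/\beta,i)<0$, and this is exactly where I expect the main obstacle: one must show these endpoint contributions stay controlled \emph{uniformly in $\Delta$}. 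I would resolve them using the explicit super-/sub-linear shape of $f(x,i)=\alpha_{-1}(i)x^{-1}-\alpha_{0}(i)+\alpha_{1}(i)x-\alpha_{2}(i)x^{\rho}$ together with the choice of $\Delta^*$. Since $\rho>1$ forces $f(\cdot,i)\to-\infty$ as $x\to\infty$, the inequality $f(\beta,i)>0$ with $\beta>1$ keeps $\beta$ below the largest zero of $f(\cdot,i)$; maximising over the finite set $\mathcal{S}$ gives a $\Delta$-free bound $f(\beta,i)\le C_2$, whence $(x-\beta)f(\beta,i)\le C_2 x\le\tfrac12 C_2(1+x^2)$. Likewise, because $\Delta^*$ was selected so that $f(x,i)>0$ on $(0,\Delta^*)$, the inequality $f(1/\beta,i)<0$ can occur only for $1/\beta\in[\Delta^*,1]$, a compact set on which $|f(\cdot,i)|\le C_1$ uniformly; since $|x-1/\beta|\le 1+|x|$ there, the correction is $\le C_1(1+|x|)\le\tfrac32 C_1(1+x^2)$. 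Collecting the contributions, \eqref{eq:20} follows with $\mathcal{K}_5:=2\mathcal{K}_4+\tfrac12 C_2+\tfrac32 C_1$, which is manifestly independent of $\Delta$; the $\Delta$-uniformity of these endpoint bounds is precisely the crux of the proof.
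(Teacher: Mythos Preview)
The paper does not supply its own proof of this lemma; it simply refers the reader to \cite{emma}. Your argument is complete and correct: the reduction via the projection $u=(1/\beta)\vee(x\wedge\beta)$, the verification that $g_{\Delta}(x)\le g(u)$ in each regime, the splitting $xf(u,i)=uf(u,i)+(x-u)f(u,i)$ so that Lemma~\ref{eq:L3} handles the first piece, and the case analysis on the correction $(x-u)f(u,i)$ all go through as you describe. The two endpoint cases are handled correctly: for $x>\beta$ you use that $\sup_{z\ge 1}f(z,i)<\infty$ (since $-\alpha_2(i)z^{\rho}$ dominates), giving a $\Delta$-free bound $C_2$; for $x<1/\beta$ you use the choice of $\Delta^*$ to confine $1/\beta$ to the compact interval $[\Delta^*,1]$ whenever $f(1/\beta,i)<0$, yielding a $\Delta$-free bound $C_1$. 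This is the standard route for such truncation-preservation lemmas and is almost certainly what the cited reference does as well.
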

Let also recall the following useful lemma.
\begin{lemma}\label{eq:L5}
Given $\Delta>0$, let $r_{\Delta}^k=r_{\Delta}(k\Delta)$ for $k\ge 0$. Then $\{r_{\Delta}^k,k=0,1,2,...\}$ is a discrete Markov chain with the one-step transition probability matrix
\begin{equation*}
P(\Delta)=(P_{ij}(\Delta))_{N\times N}=e^{\Delta \Gamma}.
\end{equation*}
\end{lemma}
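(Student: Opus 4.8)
The plan is to exploit the fact that the sampled sequence $\{r_\Delta^k\}$ is obtained by observing the continuous-time chain $r(\cdot)$ only at the equidistant grid points $k\Delta$, so that both its Markov character and its one-step law descend directly from the corresponding properties of $r(\cdot)$. First I would recall the transition probability matrix $P(t)=(P_{ij}(t))_{N\times N}$ of the continuous-time chain, defined by $P_{ij}(t)=\mathbb{P}\{r(s+t)=j\mid r(s)=i\}$, which by the time-homogeneity built into \eqref{eq:3} is independent of $s$, and which satisfies the Chapman--Kolmogorov identity $P(s+t)=P(s)P(t)$ together with $P(0)=I$.

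Next, from the infinitesimal description \eqref{eq:3}--\eqref{3*} I would read off that, as $\delta\downarrow 0$, $P_{ij}(\delta)=1_{\{i=j\}}+\gamma_{ij}\delta+o(\delta)$, i.e. in matrix form $P(\delta)=I+\Gamma\delta+o(\delta)$. Substituting into Chapman--Kolmogorov gives $P(t+\delta)-P(t)=P(t)(P(\delta)-I)=P(t)\Gamma\delta+o(\delta)$; dividing by $\delta$ and letting $\delta\to 0$ yields the forward Kolmogorov equation $\tfrac{d}{dt}P(t)=P(t)\Gamma$ with initial condition $P(0)=I$. Since $\Gamma$ is a fixed $N\times N$ matrix, this linear matrix-valued ODE has the unique solution $P(t)=e^{t\Gamma}$; in particular $P(\Delta)=e^{\Delta\Gamma}$.

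Finally I would verify the two defining features of the discrete chain. For the Markov property, conditioning on $r_\Delta^k=r(k\Delta)=i$ together with arbitrary past values $r_\Delta^0,\dots,r_\Delta^{k-1}$ and invoking the Markov property of $r(\cdot)$ at the time $k\Delta$, one obtains $\mathbb{P}(r_\Delta^{k+1}=j\mid r_\Delta^k=i,\dots,r_\Delta^0)=\mathbb{P}(r((k+1)\Delta)=j\mid r(k\Delta)=i)$, which depends only on the present state $i$. Time-homogeneity of $r(\cdot)$ then identifies this conditional probability with $P_{ij}(\Delta)=(e^{\Delta\Gamma})_{ij}$, independently of $k$. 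Hence $\{r_\Delta^k\}_{k\ge 0}$ is a discrete Markov chain whose one-step transition matrix is exactly $P(\Delta)=e^{\Delta\Gamma}$, as claimed.

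The step I expect to be the only genuine point requiring care is the passage from the infinitesimal data \eqref{eq:3} to the matrix exponential: one must justify the expansion $P(\delta)=I+\Gamma\delta+o(\delta)$ uniformly in the entries, derive the forward equation, and appeal to uniqueness for the resulting linear ODE so that $e^{t\Gamma}$ is the only solution. The remaining measure-theoretic bookkeeping needed to transfer the Markov property from continuous to discrete time is routine.
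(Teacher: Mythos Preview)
Your argument is correct and complete: you derive $P(t)=e^{t\Gamma}$ from the infinitesimal description via the forward Kolmogorov equation, and then transfer the Markov property and time-homogeneity from $r(\cdot)$ to the sampled sequence $\{r_\Delta^k\}$. Each step is standard and sound.

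By contrast, the paper does not prove this lemma at all. It is introduced with the phrase ``Let also recall the following useful lemma'' and is simply stated as a known fact, with the subsequent text describing only how to \emph{simulate} the discrete chain $\{r_\Delta^k\}$ using the one-step matrix $P(\Delta)$. So your proposal supplies a genuine proof where the paper offers none; there is no competing approach to compare against.
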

The discrete Markovian chain $\{r_{\Delta}^k,k=0,1,2,...\}$ can be simulated as follows: compute the one-step transition probability matrix
\begin{equation*}
P(\Delta)=(P_{ij}(\Delta))_{N\times N}=e^{\Delta \Gamma}.
\end{equation*}
Let $r_{\Delta}^0=i_0$ and generate a random number $\varpi$ which is uniformly distributed in $[0,1]$. Define
\begin{equation*}
r_{\Delta}^1=
\begin{cases}
  i_1 & \mbox{if $i_1\in \mathcal{S}-\{N\}$ such that $\sum_{j=i}^{i_1-1}P_{i_0,j(\Delta)}\le \varpi_1 <\sum_{j=i}^{i_1}P_{i_0,j(\Delta)} $ }\\
  N, & \mbox{if $\sum_{j=i}^{N-1}P_{i_0,j(\Delta)}\le \varpi_1$},
\end{cases}
\end{equation*}
where we set $\sum_{j=i}^{0}P_{i_0,j(\Delta)}=0$ as usual. Generate independently a new random number $\varpi_2$ which is again uniformly distributed in $[0,1]$ and then define 
\begin{equation*}
r_{\Delta}^2=
\begin{cases}
  i_2 & \mbox{if $ i_2\in \mathcal{S}-\{N\}$ such that $\sum_{j=i}^{i_2-1}P_{r_{\Delta}^1,j(\Delta)}\le \varpi_2 <\sum_{j=i}^{i_2}P_{r_{\Delta}^1,j(\Delta)} $ }\\
  N, & \mbox{if $\sum_{j=i}^{N-1}P_{r_{\Delta}^1,j(\Delta)}\le \varpi_2$},
\end{cases}
\end{equation*}
Repeating this procedure, a trajectory of $\{r_{\Delta}^k,k=0,1,2,...\}$ can be generated.
\par
Given the discrete Markovian chain scheme, we now form the discrete-time truncated EM scheme for \textup{SDDE} \eqref{eq:4} by first letting $T>0$ be arbitrarily fixed and the step size $\Delta\in (0,\Delta^*]$ be a fraction of $\tau$. Define $\Delta=\tau/M$ for some positive integer $M$. Define $t_k=k\Delta$ for $k=-M,-(M-1),..,0,1,2,..$, set $X_{\Delta}(t_k)= \xi(t_k)$ for $k=-M,-(M-1),..,0$  and then compute
\begin{align}\label{eq:21}
X_{\Delta}(t_{k+1})&=X_{\Delta}(t_k)+f_{\Delta}(X_{\Delta}(t_k),r_{\Delta}(t_k))\Delta+\varphi(X_{\Delta}(t_{k-M}),r_{\Delta}(t_k))g_{\Delta}(X_{\Delta}(t_{k}))\Delta B_k\nonumber\\
&+h(X_{\Delta}(t_{k}),r_{\Delta}(t_k))\Delta N_k
\end{align}
for $k=0,1,2...,$ where $\Delta B_k=B(t_{k+1})-B(t_k)$ and $\Delta N_k=N(t_{k+1})-N(t_k)$. We have two versions of the continuous-time truncated EM solutions. The first one is defined by
\begin{equation}\label{eq:22}
\bar{x}_{\Delta}(t)=\sum_{k=-M}^{\infty}X_{\Delta}(t_k)1_{[t_k,t_{k+1})}(t) \text{ and }\bar{r}_{\Delta}(t)=\sum_{k=-M}^{\infty}r_{\Delta}(t_k)1_{[t_k,t_{k+1})}(t).
\end{equation}
These are the continuous-time step processes $\bar{x}_{\Delta}(t)$ and $\bar{r}_{\Delta}(t)$ on $t\ge -\tau$, where $1_{[t_k,t_{k+1})}$ is the indicator function on $[t_k,t_{k+1})$. The second one is the continuous-time continuous process $x_{\Delta}(t)$ on $t\ge -\tau$ defined by setting $x_{\Delta}(t)=\xi(t)$ for $t\in [-\tau,0]$ while for $t\geq 0$
\begin{align}\label{eq:23}
x_{\Delta}(t)&=\xi(0)+\int_0^t f_{\Delta}(\bar{x}_{\Delta}(s^-),\bar{r}_{\Delta}(s))ds+\int_0^t \varphi(\bar{x}_{\Delta}((s-\tau)^-),\bar{r}_{\Delta}(s))g_{\Delta}(\bar{x}_{\Delta}(s^-))dB(s)\nonumber\\
&+\int_0^t h(\bar{x}_{\Delta}(s^-),\bar{r}_{\Delta}(s))dN(s).
\end{align}
Apparently $x_{\Delta}(t)$ is an It\^{o} process on $t\geq 0$ satisfying It\^{o} differential
\begin{align}\label{eq:24}
dx_{\Delta}(t)&= f_{\Delta}(\bar{x}_{\Delta}(t^-),\bar{r}_{\Delta}(t))dt+\varphi(\bar{x}_{\Delta}((t-\tau)^-),\bar{r}_{\Delta}(t))g_{\Delta}(\bar{x}_{\Delta}(t^-))dB(t)\nonumber\\
&+h(\bar{x}_{\Delta}(t^-),\bar{r}_{\Delta}(t))dN(t).
\end{align}
\noindent We observe $x_{\Delta}(t_{k})=\bar{x}_{\Delta}(t_k)=X_{\Delta}(t_k)$, for all $k=-M,-(M-1),....$.
\section{Numerical properties}
Let now investigate the numerical properties of the truncated EM scheme. In the sequel, we let
\begin{equation*}
k(t)=[t/\Delta]\Delta,
\end{equation*}
for any $t\in [0,T]$, where $[t/\Delta]$ denotes the integer part of $t/\Delta$. The following lemma affirms $x_{\Delta}(t)$ and $\bar{x}_{\Delta}(t)$ are close to each other in strong sense.
\subsection{Moment boundedness}
\begin{lemma}\label{eq:L6}
Let Assumption \ref{eq:a1} hold. Then for any fixed $\Delta\in (0,\Delta^*]$, we have for $p\in[2,\infty)$
\begin{equation}\label{eq:25}
 \mathbb{E}\Big(|x_{\Delta}(t)-\bar{x}_{\Delta}(t)|^{p}\big| \mathcal{F}_{k(t)}\Big)\leq \mathfrak{C}_1\Big(\Delta^{p/2}(\psi(\Delta))^{p}+\Delta\Big)|\bar{x}_{\Delta}(t)|^{p}
\end{equation}
$\forall t\geq 0$, where $\mathfrak{C}_1$ denotes positive generic constants dependent only on $p$ and may change between occurrences.
\end{lemma}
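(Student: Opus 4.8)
The plan is to work on the single subinterval $[k(t),t]$, on which the two step processes $\bar x_\Delta(\cdot)$ and $\bar r_\Delta(\cdot)$ are frozen at their values at $k(t)$ and hence are $\mathcal F_{k(t)}$-measurable. Since $\bar x_\Delta(t)=x_\Delta(k(t))$, subtracting this from the integral representation \eqref{eq:23} and using that each integrand is constant over $[k(t),t)$, I would write, for $t\in[t_k,t_{k+1})$,
\begin{align*}
x_\Delta(t)-\bar x_\Delta(t)&=f_\Delta(\bar x_\Delta(t),\bar r_\Delta(t))\,(t-k(t))\\
&\quad+\varphi(\bar x_\Delta(t-\tau),\bar r_\Delta(t))\,g_\Delta(\bar x_\Delta(t))\,(B(t)-B(k(t)))\\
&\quad+h(\bar x_\Delta(t),\bar r_\Delta(t))\,(N(t)-N(k(t))).
\end{align*}
Applying $|a+b+c|^p\le 3^{p-1}(|a|^p+|b|^p+|c|^p)$ and then $\mathbb E(\,\cdot\mid\mathcal F_{k(t)})$ splits the problem into a drift, a diffusion and a jump contribution, where in each case the coefficient is $\mathcal F_{k(t)}$-measurable and factors out of the conditional expectation.

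Next I would estimate the three increments, using that $B(t)-B(k(t))$ and $N(t)-N(k(t))$ are independent of $\mathcal F_{k(t)}$. The Gaussian moment formula gives $\mathbb E\big(|B(t)-B(k(t))|^p\mid\mathcal F_{k(t)}\big)=C_p\,(t-k(t))^{p/2}\le C_p\Delta^{p/2}$; expanding the $p$-th moment of a $\mathrm{Poisson}(\lambda(t-k(t)))$ variable in powers of its mean, whose leading term is linear, gives $\mathbb E\big(|N(t)-N(k(t))|^p\mid\mathcal F_{k(t)}\big)\le C_{p,\lambda}(t-k(t))\le C_{p,\lambda}\Delta$ for $\Delta\le 1$; and the drift factor is deterministic with $(t-k(t))^p\le\Delta^p\le\Delta^{p/2}$.

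The crux is extracting the factor $|\bar x_\Delta(t)|^p$. For the jump term this is automatic, since $h(x,i)=\alpha_3(i)x$ is linear, so $|h(\bar x_\Delta(t),\bar r_\Delta(t))|\le(\max_i\alpha_3(i))\,|\bar x_\Delta(t)|$, which together with the Poisson bound yields the $\Delta|\bar x_\Delta(t)|^p$ term. For the diffusion term, Assumption \ref{eq:a1} gives $\varphi\le\sigma$, and the key observation is that the truncation allows the splitting $g_\Delta(x)=(x\wedge\mu^{-1}(\psi(\Delta)))^{\theta}\le(\mu^{-1}(\psi(\Delta)))^{\theta-1}|x|$; since $\mu^{-1}(\psi(\Delta))>1$ and $(\mu^{-1}(\psi(\Delta)))^{\theta}=g(\mu^{-1}(\psi(\Delta)))\le\psi(\Delta)$ by \eqref{eq:17}, we obtain $g_\Delta(x)\le\psi(\Delta)|x|$, so the diffusion contribution is bounded by $C\Delta^{p/2}(\psi(\Delta))^p|\bar x_\Delta(t)|^p$. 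Collecting the three pieces into a single constant $\mathfrak C_1$ then produces \eqref{eq:25}.

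I expect the drift term to be the main obstacle. The only available bound $|f_\Delta|\le\psi(\Delta)$ from \eqref{eq:19} contributes $(\psi(\Delta))^p(t-k(t))^p\le\Delta^{p/2}(\psi(\Delta))^p$ but, unlike $g_\Delta$ and $h$, does \emph{not} by itself carry a factor $|\bar x_\Delta(t)|^p$, precisely because the $x^{-1}$ term in $f$ rules out a bound linear in $x$ near the origin. To force this term into the stated form I would either pass through the slightly weaker estimate $\Delta^{p/2}(\psi(\Delta))^p(1+|\bar x_\Delta(t)|^p)$ and absorb the constant using that the positivity and moment control of the scheme keep $|\bar x_\Delta(t)|$ bounded away from $0$ (so the constant term is dominated by $|\bar x_\Delta(t)|^p$ after enlarging $\mathfrak C_1$), or argue directly on the frozen interval; making this domination rigorous is the one delicate point in what is otherwise a routine increment estimate.
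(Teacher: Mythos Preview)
Your decomposition and the treatment of the jump term match the paper's proof exactly. For the diffusion term your bound $g_\Delta(x)\le\psi(\Delta)\,|x|$ is in fact sharper than what the paper does: the paper simply invokes $g_\Delta\le\psi(\Delta)$ from \eqref{eq:19}, so its diffusion contribution, like its drift contribution, carries \emph{no} factor of $|\bar x_\Delta(t)|^p$.

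Your hesitation on the drift term is fully justified, and the paper does not resolve it either. After bounding the three pieces the paper reaches
\[
3^{p-1}\bigl(1\vee C_p\sigma^{p}\vee \bar C\alpha_3^{p}\bigr)\Bigl(\Delta^{p/2}(\psi(\Delta))^{p}+|\bar x_\Delta(t)|^{p}\Delta\Bigr)
\]
and then simply rewrites this as $\mathfrak C_1\bigl(\Delta^{p/2}(\psi(\Delta))^{p}+\Delta\bigr)|\bar x_\Delta(t)|^{p}$ without comment. That step would require $|\bar x_\Delta(t)|\ge 1$, which is nowhere established (the discrete scheme is $\mathbb R$-valued and can be arbitrarily close to, or below, zero), so the obstacle you flag is present in the paper's own argument. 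The natural fix is exactly the one you propose: prove the estimate with $\bigl(1+|\bar x_\Delta(t)|^{p}\bigr)$ in place of $|\bar x_\Delta(t)|^{p}$. This weaker form is what the paper's intermediate line actually delivers, and it suffices for every downstream use of the lemma, since in Lemmas \ref{eq:L7}, \ref{eq:L8} and \ref{eq:L9} the bound is always paired with the uniform moment control $\sup_{t}\mathbb E|\bar x_\Delta(t)|^p\le c_3$.
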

\begin{proof}
Fix any $\Delta \in (0,\Delta^*)$ and $t\in [0,T]$. Then for $ p\in[2,\infty)$, we derive 
\begin{align*}
\mathbb{E}\Big(|x_{\Delta}(t)-\bar{x}_{\Delta}(t)|^{p}\big| \mathcal{F}_{k(t)}\Big)&\le 3^{p-1}\Big(\mathbb{E}\big(| \int_{k(t)}^{t} f_{\Delta}(\bar{x}_{\Delta}(s),\bar{r}_{\Delta}(s))ds|^{p}\big| \mathcal{F}_{k(t)}\big)\\
&+\mathbb{E}\big(|\int_{k(t)}^t \varphi(\bar{x}_{\Delta}((s-\tau)),\bar{r}_{\Delta}(s))g_{\Delta}(\bar{x}_{\Delta}(s))dB(s)|^{p}\big| \mathcal{F}_{k(t)}\big)\\
&+\mathbb{E}\big(|\int_{k(t)}^t h(\bar{x}_{\Delta}(s),\bar{r}_{\Delta}(s))dN(s)|^{p}\big| \mathcal{F}_{k(t)}\big) \Big)\\
&\leq 3^{p-1} \Big(\Delta^{p-1}\mathbb{E}( \int_{k(t)}^t |f_{\Delta}(\bar{x}_{\Delta}(s),\bar{r}_{\Delta}(s))|^{p} ds\big| \mathcal{F}_{k(t)})\\
&+ C_p\Delta^{(p-2)/2} \mathbb{E}(\int_{k(t)}^{t}|\varphi(\bar{x}_{\Delta}((s-\tau)),\bar{r}_{\Delta}(s))g_{\Delta}(\bar{x}_{\Delta}(s))|^{p} ds\big| \mathcal{F}_{k(t)})\\
&+\mathbb{E}(|\int_{k(t)}^t h(\bar{x}_{\Delta}(s),\bar{r}_{\Delta}(s))dN(s)|^{p}\big| \mathcal{F}_{k(t)} )\Big)\\
&\leq 3^{p-1}\Big(\Delta^{p-1}\Delta (\psi(\Delta))^{p} +C_p\Delta^{(p-2)/2}\Delta (\sigma \psi(\Delta))^{p}\\
&+\mathbb{E}(|\int_{k(t)}^t h(\bar{x}_{\Delta}(s),\bar{r}_{\Delta}(s))dN(s)|^{p}\big|
\mathcal{F}_{k(t)}) \Big).
\end{align*}
Recalling the characteristic function's argument $\mathbb{E}|\Delta N_k|^{p} \leq \bar{C}\Delta$, $\forall \Delta\in (0,\Delta^*)$, in \cite{HiKlo}, we note
\begin{align*}
\mathbb{E}\Big(|x_{\Delta}(t)-\bar{x}_{\Delta}(t)|^{p}\big| \mathcal{F}_{k(t)}\Big)&\le 3^{p-1}\Big(\Delta^{p-1}\Delta (\psi(\Delta))^{p} +C_p\Delta^{(p-2)/2}\Delta (\sigma \psi(\Delta))^{p}
+|h(\bar{x}_{\Delta}(t),r(t))|^{p} \mathbb{E}|\Delta N_k|^{p}\Big)\\
&\le 3^{p-1}\Big(\Delta^{p-1}\Delta (\psi(\Delta))^{p} +C_p\Delta^{(p-2)/2}\Delta (\sigma \psi(\Delta))^{p}+\bar{C} \alpha_3(i)^{p}|\bar{x}_{\Delta}(t)|^{p}\Delta\Big),
\end{align*}
where $h(\cdot,\cdot)$ and $\bar{C}>0$ are independent of $N_k$ and $\Delta$ respectively. We now have
\begin{align*}
\mathbb{E}\Big(|x_{\Delta}(t)-\bar{x}_{\Delta}(t)|^{p}\big| \mathcal{F}_{k(t)}\Big)&\le 3^{p-1}(1\vee C_p\sigma^{p}\vee \bar{C}\alpha_3(i)^{p})\Big(\Delta^{p/2}(\psi(\Delta))^{p}+|\bar{x}_{\Delta}(t)|^{p}\Delta \Big)\\
&\leq \mathfrak{C}_1\Big(\Delta^{p/2}(\psi(\Delta))^{p}+\Delta \Big)|\bar{x}_{\Delta}(t)|^{p},
\end{align*}
where 
$$\mathfrak{C}_1=3^{p-1}(1\vee C_p\sigma^{p}\vee \bar{C}\alpha_3^{p}) \text{ and } \alpha_3=\max_{i\in \mathcal{S}}\alpha_3(i).$$
Moreover, for $p\in(0,2)$, we obtain from the Jensen inequality that
\begin{align}\label{eq:le2}
\mathbb{E}\Big(|x_{\Delta}(t)-\bar{x}_{\Delta}(t)|^{p}\big| \mathcal{F}_{k(t)}\Big)&\le\Big[\mathfrak{C}_1\Big(\Delta(\psi(\Delta))^{2}+\Delta \Big)|\bar{x}_{\Delta}(t)|^{p} \Big]^{p/2}\nonumber\\
&\le 2^{p/2-1}\mathfrak{C}_1^{p/2}\Big(\Delta^{p/2}(\psi(\Delta))^{p}+\Delta^{p/2} \Big)(|\bar{x}_{\Delta}(t)|^{p})^{p/2}\nonumber\\
&\le \mathfrak{C}_2\Big(\Delta^{p/2}(\psi(\Delta))^{p}\Big)|\bar{x}_{\Delta}(t)|^{p},
\end{align}
where $\mathfrak{C}_2=2^{p/2}\mathfrak{C}_1^{p/2}$. The proof is complete.
\end{proof}
The following lemma reveals the numerical solutions have upper bound.
\begin{lemma}\label{eq:L7}
Let Assumptions \ref{eq:a1} and \ref{eq:a2} hold. Then for any $ p\ge 3$
\begin{equation}\label{eq:26}
  \sup_{0\leq \Delta \leq \Delta^*} \sup_{0\leq t\leq T}(\mathbb{E}|x_{\Delta}(t)|^p)\leq c_3, \quad \forall T> 0,
\end{equation}
where $c_3:=c_3(T, p, K, \xi)$ and may change between occurrences.
\end{lemma}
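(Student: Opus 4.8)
The plan is to apply the generalized It\^o formula \eqref{eq:7} to the continuous-time process $x_\Delta(t)$ of \eqref{eq:24} with the Lyapunov function $H(x)=|x|^p$, take expectations (localising by the stopping times $\tau_n=\inf\{t\ge0:x_\Delta(t)\notin(1/n,n)\}$ and letting $n\to\infty$ by Fatou, exactly as in the earlier lemmas), and close with Gronwall's inequality. Because $|x|^p$ is regime-independent the switching sum $\sum_j\gamma_{ij}|x|^p$ vanishes by \eqref{3*}, and the $dB$ and $d\tilde N$ integrals have zero mean, so
\[
\mathbb{E}|x_\Delta(t)|^p = |\xi(0)|^p + \mathbb{E}\int_0^t \Big[ p|x_\Delta(s)|^{p-2}x_\Delta(s)f_\Delta(\bar x_\Delta(s),\bar r_\Delta(s)) + \tfrac{p(p-1)}{2}|x_\Delta(s)|^{p-2}|\varphi(\cdot)g_\Delta(\bar x_\Delta(s))|^2 + \lambda\big(|x_\Delta(s)+h(\bar x_\Delta(s),\bar r_\Delta(s))|^p-|x_\Delta(s)|^p\big)\Big]ds.
\]
The crux of the whole argument is that every coefficient is evaluated at the step value $\bar x_\Delta$ while the powers carry the continuous value $x_\Delta$, and the two must be reconciled with constants independent of $\Delta$.

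First I would treat the drift--diffusion part by factoring out $p|x_\Delta|^{p-2}$ and splitting $x_\Delta=\bar x_\Delta+(x_\Delta-\bar x_\Delta)$, giving
\[
p|x_\Delta|^{p-2}\Big[\bar x_\Delta f_\Delta(\bar x_\Delta)+\tfrac{p-1}{2}|\varphi g_\Delta(\bar x_\Delta)|^2\Big] + p|x_\Delta|^{p-2}(x_\Delta-\bar x_\Delta)f_\Delta(\bar x_\Delta).
\]
The bracket is $\le\mathcal K_5(1+|\bar x_\Delta|^2)$ by Lemma~\ref{eq:L4}, so the first term is controlled by $C(1+|x_\Delta|^p+|\bar x_\Delta|^p)$ through Young's inequality. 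For the residual I would use $|f_\Delta|\le\psi(\Delta)$ from \eqref{eq:19} together with the elementary bound $|x_\Delta|^{p-2}\le 2^{p-3}(|\bar x_\Delta|^{p-2}+|x_\Delta-\bar x_\Delta|^{p-2})$, which is precisely where the hypothesis $p\ge3$ (so that $p-2\ge1$) enters. Taking $\mathbb{E}(\cdot\mid\mathcal F_{k(s)})$, the $\mathcal F_{k(s)}$-measurable factors $|\bar x_\Delta|^{p-2}$ pull out and Lemma~\ref{eq:L6} turns the remaining factors $|x_\Delta-\bar x_\Delta|$ and $|x_\Delta-\bar x_\Delta|^{p-1}$ into powers of $\Delta^{1/2}\psi(\Delta)$ multiplying $|\bar x_\Delta|^{p-1}$. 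The calibration $\Delta^{1/4}\psi(\Delta)\le1$ in \eqref{eq:18} then keeps quantities such as $\Delta^{1/2}\psi(\Delta)^2$ and $\Delta^{(p-1)/2}\psi(\Delta)^p\le\Delta^{(p-2)/4}$ bounded uniformly in $\Delta$, so the residual is $\le\mathfrak C(1+|\bar x_\Delta|^p)$.

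Next the jump contribution is handled algebraically: since $h(\bar x_\Delta,i)=\alpha_3(i)\bar x_\Delta$ and $\alpha_3(i)$ is bounded over the finite space $\mathcal S$, the inequality $|x_\Delta+\alpha_3\bar x_\Delta|^p\le 2^{p-1}(|x_\Delta|^p+\alpha_3^p|\bar x_\Delta|^p)$ bounds the jump integrand by $C(|x_\Delta|^p+|\bar x_\Delta|^p)$. Because the volatility enters only through $\varphi$, bounded by $\sigma$ under Assumption~\ref{eq:a1}, the delay argument $\bar x_\Delta((s-\tau))$ never contributes growth and needs no separate treatment. Collecting the three parts and using $\mathbb{E}|\bar x_\Delta(s)|^p=\mathbb{E}|x_\Delta(k(s))|^p\le\sup_{0\le u\le s}\mathbb{E}|x_\Delta(u)|^p$ yields
\[
\sup_{0\le u\le t}\mathbb{E}|x_\Delta(u)|^p \le |\xi(0)|^p + C\int_0^t\Big(1+\sup_{0\le u\le s}\mathbb{E}|x_\Delta(u)|^p\Big)ds,
\]
with $C$ independent of $\Delta$ (every constant traces back to $\mathcal K_5$, $\sigma$, $\alpha_3$, $\lambda$ and \eqref{eq:18}). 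Gronwall's inequality then delivers the uniform bound $c_3=c_3(T,p,K,\xi)$ over $\Delta\in(0,\Delta^*]$ and $t\in[0,T]$, i.e. \eqref{eq:26}. The main obstacle throughout is the uniform-in-$\Delta$ control of the $x_\Delta$-versus-$\bar x_\Delta$ mismatch, for which the interplay between Lemma~\ref{eq:L6} and the constraint $\Delta^{1/4}\psi(\Delta)\le1$ is indispensable.
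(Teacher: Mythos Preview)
Your proposal is correct and follows essentially the same route as the paper: apply It\^o's formula to $|x_\Delta|^p$, split the drift--diffusion integrand into the Lemma~\ref{eq:L4} piece plus the residual $p|x_\Delta|^{p-2}(x_\Delta-\bar x_\Delta)f_\Delta(\bar x_\Delta)$, handle the residual via the triangle inequality $|x_\Delta|^{p-2}\le 2^{p-3}(|\bar x_\Delta|^{p-2}+|x_\Delta-\bar x_\Delta|^{p-2})$ (which is where $p\ge3$ is used) together with the conditional estimate of Lemma~\ref{eq:L6} and the calibration $\Delta^{1/4}\psi(\Delta)\le1$, bound the jump term by $2^{p-1}(|x_\Delta|^p+\alpha_3^p|\bar x_\Delta|^p)$, and close with Gronwall. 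The paper labels these pieces $\mathcal J_{11},\mathcal J_{12},\mathcal J_{13}$ and carries slightly more explicit constants, but the ideas and the order of the argument are the same.
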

\begin{proof}
Fix any $\Delta \in (0,\Delta^*)$ and $T\geq 0$. For $t\in[0,T]$, we obtain from  \eqref{eq:5} and Lemma \ref{eq:L4} 
\begin{align*}
\mathbb{E}|x_{\Delta}(t)|^p-|\xi(0)|^p &\leq\mathbb{E}\int_{0}^{t}p|x_{\Delta}(s^-)|^{p-2}\Big(\bar{x}_{\Delta}(s^-)f_{\Delta}(\bar{x}_{\Delta}(s^-),\bar{r}_{\Delta}(s))\\
&+\frac{p-1}{2}|\varphi(\bar{x}_{\Delta}((s-\tau)^-),\bar{r}_{\Delta}(s))g_{\Delta}(\bar{x}_{\Delta}(s^-))|^2\Big)ds\\
&+\mathbb{E}\int_{0}^{t}p|x_{\Delta}(s^-)|^{p-2}(x_{\Delta}(s^-)-\bar{x}_{\Delta}(s^-))f_{\Delta}(\bar{x}_{\Delta}(s^-),\bar{r}_{\Delta}(s))ds\\
&+\lambda\mathbb{E}\Big(\int_{0}^{t}|x_{\Delta}(s^-)+h(\bar{x}_{\Delta}(s^-),\bar{r}_{\Delta}(s))|^p-|x_{\Delta}(s^-)|^p\Big)ds\\
&\leq \mathcal{J}_{11}+\mathcal{J}_{12}+\mathcal{J}_{13},
\end{align*}
where
\begin{align*}
\mathcal{J}_{11}&=\mathbb{E}\int_{0}^{t}\mathcal{K}_5p|x_{\Delta}(s^-)|^{p-2}(1+|\bar{x}_{\Delta}(s^-)|^2)ds\\
\mathcal{J}_{12}&=\mathbb{E}\int_{0}^{t}p|x_{\Delta}(s^-)|^{p-2}\Big(x_{\Delta}(s^-)-\bar{x}_{\Delta}(s^-)\Big)f_{\Delta}(\bar{x}_{\Delta}(s^-),\bar{r}_{\Delta}(s))ds\\
\mathcal{J}_{13}&=\lambda\mathbb{E}\Big(\int_{0}^{t}|x_{\Delta}(s^-)+h(\bar{x}_{\Delta}(s^-),\bar{r}_{\Delta}(s))|^p-|x_{\Delta}(s^-)|^p\Big)ds.
\end{align*}
The Young inequality gives us
\begin{align*}
\mathcal{J}_{11}&\leq \mathcal{K}_5\int_{0}^{t}\Big((p-2)\mathbb{E}|x_{\Delta}(s^-)|^{p}+2^p(1 +\mathbb{E}|\bar{x}_{\Delta}(s^-)|^{p})\Big)ds\\
&\leq \nu_1\int_{0}^{t}(1+\mathbb{E}|x_{\Delta}(s)|^{p}+\mathbb{E}|\bar{x}_{\Delta}(s)|^{p})ds,
\end{align*}
where $\nu_1=\mathcal{K}_5(2^p\vee (p-2))$. By the triangle inequality, we have for $p\ge 3$
\begin{align*}
\mathcal{J}_{12}&\le p\mathbb{E}\int_{0}^{t}\Big(|x_{\Delta}(s^-)-\bar{x}_{\Delta}(s^-)|+|\bar{x}_{\Delta}(s^-)|\Big)^{p-2}|x_{\Delta}(s^-)-\bar{x}_{\Delta}(s^-)||f_{\Delta}(\bar{x}_{\Delta}(s^-),\bar{r}_{\Delta}(s))|ds\\
&\le 2^{(p-3)}p\mathbb{E}\int_{0}^{t}\Big(|x_{\Delta}(s^-)-\bar{x}_{\Delta}(s^-)|^{p-2}+|\bar{x}_{\Delta}(s^-)|^{p-2}\Big)|x_{\Delta}(s^-)-\bar{x}_{\Delta}(s^-)||f_{\Delta}(\bar{x}_{\Delta}(s^-),\bar{r}_{\Delta}(s))|ds\\
&= \mathcal{J}_{121}+\mathcal{J}_{122},
\end{align*}
where 
\begin{align*}
\mathcal{J}_{121}&= 2^{(p-3)}p\mathbb{E}\int_{0}^{t}|\bar{x}_{\Delta}(s^-)|^{p-2}|x_{\Delta}(s^-)-\bar{x}_{\Delta}(s)||f_{\Delta}(\bar{x}_{\Delta}(s^-),\bar{r}_{\Delta}(s))|ds\\
\mathcal{J}_{122}&= 2^{(p-3)}p\mathbb{E}\int_{0}^{t}|x_{\Delta}(s^-)-\bar{x}_{\Delta}(s^-)|^{p-1}|f_{\Delta}(\bar{x}_{\Delta}(s^-),\bar{r}_{\Delta}(s))|ds.
\end{align*}
We now obtain from \eqref{eq:19} and \eqref{eq:le2}
\begin{align}\label{eq:h0}
\mathcal{J}_{121}&\le 2^{(p-3)}p\int_{0}^{t}\mathbb{E}\Big\{|\bar{x}_{\Delta}(s)|^{p-2}|f_{\Delta}(\bar{x}_{\Delta}(s),\bar{r}_{\Delta}(s))|\mathbb{E}\Big( |x_{\Delta}(s)-\bar{x}_{\Delta}(s)|\mathcal{F}_{k(s)})\Big)\Big\}ds\nonumber \\
&\le 2^{(p-3)}p\mathfrak{C}_2(\psi(\Delta))\Delta^{1/2}(\psi(\Delta))\int_{0}^{t}\mathbb{E}\Big\{|\bar{x}_{\Delta}(s)|(|\bar{x}_{\Delta}(s)|^{p-2})\Big\}ds\nonumber \\
&\le 2^{(p-3)}p\mathfrak{C}_2(\psi(\Delta))\Delta^{1/2}(\psi(\Delta))\int_{0}^{t}\mathbb{E}|\bar{x}_{\Delta}(s)|^{p-1}ds\nonumber \\
&\le 2^{(p-3)}\mathfrak{C}_2(\psi(\Delta))^2\Delta^{1/2}\int_{0}^{t}\Big(1+(p-1)\mathbb{E}|\bar{x}_{\Delta}(s)|^{p}\Big)ds\nonumber\\
&\le \nu_2+\nu_3\int_{0}^{t}\mathbb{E}|\bar{x}_{\Delta}(s)|^{p}ds,
\end{align}
where $\nu_2=2^{(p-3)}\mathfrak{C}_2T$, $\nu_3=2^{(p-3)}\mathfrak{C}_2(p-1)$ and $[(\psi(\Delta))\Delta^{1/4}]^2\le 1$.  We also have from \eqref{eq:19}
\begin{align}\label{eq:h1}
\mathcal{J}_{122}&\le 2^{(p-3)}p\psi(\Delta) \int_{0}^{t}\mathbb{E}|x_{\Delta}(s)-\bar{x}_{\Delta}(s)|^{p-1}ds.
\end{align}
We clearly observe that for $p\ge 3$ and $\varkappa\in(0,1/4]$, $p\varkappa\le (p-1)/2$ and hence
\begin{equation}\label{eq:h2}
\Delta^{(p-1)/2-\varkappa p}\le 1.
\end{equation}
So for $p\ge 3$ and $\varkappa=1/4$, we obtain from \eqref{eq:h1}, Lemma \ref{eq:L6}, \eqref{eq:h2} and the Young's inequality
\begin{align*}
\mathcal{J}_{122}&\le 2^{(p-3)}p\mathfrak{C}_1 \Big(\Delta^{(p-1)/2}(\psi(\Delta))^{p-1}(\psi(\Delta))+\Delta(\psi(\Delta))\Big)\int_{0}^{t}\mathbb{E}|\bar{x}_{\Delta}(s)|^{p-1}ds\\
&\le  2^{(p-3)}p\mathfrak{C}_1\Big(\Delta^{(p-1)/2}(\psi(\Delta))^{p}+\Delta(\psi(\Delta))\Big)\int_{0}^{t}\mathbb{E}|\bar{x}_{\Delta}(s)|^{p-1}ds\\
&\le 2^{(p-3)}p\mathfrak{C}_1\Big(\Delta^{(p-2)/4}+\Delta(\psi(\Delta))\Big)\int_{0}^{t}\mathbb{E}|\bar{x}_{\Delta}(s)|^{p-1}ds\\
&\le 2^{(p-2)}\mathfrak{C}_1\int_{0}^{t}\Big(1+(p-1)\mathbb{E}|\bar{x}_{\Delta}(s)|^{p}\Big)ds\\
&\le \nu_4+ \nu_5\int_{0}^{t}\mathbb{E}|\bar{x}_{\Delta}(s)|^{p}ds,
\end{align*}
where $\nu_4=2^{(p-2)}\mathfrak{C}_1T$ and $\nu_5=2^{(p-2)}\mathfrak{C}_1(p-1)$. We now combine $\mathcal{J}_{121}$ and $\mathcal{J}_{122}$ to have
\begin{align*}
\mathcal{J}_{12}&\le \nu_2+\nu_4+(\nu_3 +\nu_5)\int_{0}^{t}\mathbb{E}|\bar{x}_{\Delta}(s)|^{p}ds\\
&\le \nu_6+\nu_7\int_{0}^{t}\mathbb{E}|\bar{x}_{\Delta}(s)|^{p}ds,
\end{align*}
where $\nu_6=\nu_2+\nu_4$ and $\nu_7=\nu_3+\nu_5$.  Also we estimate $\mathcal{J}_{13}$ as
\begin{align*}
\mathcal{J}_{13}&=\lambda\mathbb{E}\Big(\int_{0}^{t}|x_{\Delta}(s^-)+h(\bar{x}_{\Delta}(s^-),\bar{r}_{\Delta}(s))|^p-|x_{\Delta}(s^-)|^p\Big)ds\\
   &\leq \lambda\mathbb{E}\Big(\int_{0}^{t}2^{p-1}|x_{\Delta}(s^-)|^p+2^{p-1}|h(\bar{x}_{\Delta}(s^-),\bar{r}_{\Delta}(s))|^p-|x_{\Delta}(s^-)|^p\Big)ds\\
   &\leq \lambda\mathbb{E}\Big(\int_{0}^{t}(2^{p-1}-1)|x_{\Delta}(s^-)|^p+2^{p-1}\alpha_3(i)^p|\bar{x}_{\Delta}(s^-)|^p\Big)ds\\
   &\leq \nu_8\int_{0}^{t}(\mathbb{E}|x_{\Delta}(s)|^p+\mathbb{E}|\bar{x}_{\Delta}(s)|^p)ds,
\end{align*}
where $\nu_8=\lambda((2^{p-1}-1)\vee 2^{p-1}\alpha_3^p)$ and $\alpha_3=\max_{i\in \mathcal{S}}\alpha_3(i)$. We combine $\mathcal{J}_{11}$, $\mathcal{J}_{12}$ and $\mathcal{J}_{13}$ to have
\begin{align*}
\mathbb{E}|x_{\Delta}(t)|^p &\le |\xi(0)|^p+(\nu_1T+\nu_6)+\int_{0}^{t}\Big((\nu_1+\nu_8)\mathbb{E}|x_{\Delta}(s)|^{p}+(\nu_1+\nu_7+\nu_8)\mathbb{E}|\bar{x}_{\Delta}(s)|^{p}\Big)ds\\
&\leq \nu_9+2\nu_{10}\int_{0}^{t}\sup_{0\leq u \leq s}\Big(\mathbb{E}|x_{\Delta}(u)|^p \Big)ds,
\end{align*}
where 
\begin{align*}
\nu_9&=|\xi(0)|^p+\nu_1T+\nu_6\\
\nu_{10}&=(\nu_1+\nu_8)\vee(\nu_1+\nu_7+\nu_8).
\end{align*}
As this holds for any $t\in [0,T]$, we then have
\begin{equation*}
  \sup_{0\leq u \leq t}(\mathbb{E}|x_{\Delta}(u)|^p)\leq \nu_9+2\nu_{10}\int_{0}^{t}\sup_{0\leq u \leq s}\Big(\mathbb{E}|x_{\Delta}(u)|^p \Big)ds.
\end{equation*}
The Gronwall inequality gives us
\begin{equation*}
  \sup_{0\leq u \leq T}(\mathbb{E}|x_{\Delta}(u)|^p)\leq c_3,
\end{equation*}
where $c_3=\nu_9e^{2\nu_{10}T}$ is independent of $\Delta$. The proof is thus complete.
\end{proof}
\subsection{Strong convergence}
\begin{lemma}\label{eq:L8}
Suppose Assumptions \ref{eq:a1}, \ref{eq:a2} and \ref{eq:a3}  hold and fix $T>0$. Then for any $\epsilon\in (0,1)$, there exists a pair $n=n(\epsilon)>0$ and $\bar{\Delta}=\bar{\Delta}(\epsilon)>0$ such that 
\begin{equation}\label{eq:27}
  \mathbb{P}(\varsigma_{\Delta,n}\leq T)\leq \epsilon
\end{equation}
as long as $\Delta \in (0, \bar \Delta]$, where
\begin{equation}\label{eq:28}
\varsigma_{\Delta,n}=\inf\{t\in [0,T]:x_{\Delta}(t)\notin (1/n,n)\}
\end{equation}
is a stopping time. 
\end{lemma}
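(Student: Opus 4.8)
The plan is to run a stopping-time/Lyapunov argument of the Khasminskii type, exactly as in Lemmas \ref{eq:L0} and \ref{eq:L1}, but now on the continuous-time interpolation $x_{\Delta}(t)$. First I would fix $p>2\vee(\rho-1)$ and introduce the Lyapunov function $U(x)=x^{p}+x^{-p}$, which blows up at both ends of $(1/n,n)$: indeed $U(x)\ge x^{-p}\ge n^{p}$ whenever $x\le 1/n$ and $U(x)\ge x^{p}\ge n^{p}$ whenever $x\ge n$ (taking $n\ge 1$). Applying the generalised It\^o formula \eqref{eq:7} to $U(x_{\Delta}(t))$ over $[0,t\wedge\varsigma_{\Delta,n}]$ and taking expectations kills the $dB$ and $d\widetilde N$ martingale parts, and the switching term $\sum_j\gamma_{ij}U$ vanishes because $U$ is independent of the regime $i$ (recall \eqref{3*}). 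This reduces the task to bounding $\mathbb{E}\int_0^{t\wedge\varsigma_{\Delta,n}}\Phi(x_{\Delta}(s),\bar x_{\Delta}(s),\bar r_{\Delta}(s))\,ds$, where $\Phi$ is the integrand obtained from \eqref{eq:24}, with the derivatives of $U$ evaluated at the continuous process $x_{\Delta}$ but the coefficients $f_{\Delta},g_{\Delta},h$ evaluated at the step process $\bar x_{\Delta}$.

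The heart of the argument is to show $\mathbb{E}\int_0^{t\wedge\varsigma_{\Delta,n}}\Phi\,ds\le C$ with $C$ independent of $n$. Here the order of the quantifiers is essential: I would fix $n$ first and then choose $\Delta$ so small that $\mu^{-1}(\psi(\Delta))>n$, which forces the truncation to be inactive on $(1/n,n)$, so that $f_{\Delta}=f$ and $g_{\Delta}=g$ there. Since $t<\varsigma_{\Delta,n}$ guarantees $x_{\Delta}(s)\in(1/n,n)$, and $\bar x_{\Delta}(s)=x_{\Delta}(k(s))\in(1/n,n)$ as well (the two coincide at grid points and $k(s)<\varsigma_{\Delta,n}$), the ``matched'' operator $\mathcal{L}U(\bar x_{\Delta})$ is exactly the true jump-diffusion operator and is bounded above by a constant $K$ independent of $n$ by the very drift-domination computation used in Lemma \ref{eq:L1} (the terms $-p\alpha_2 x^{\rho+p-1}$ and $-p\alpha_{-1}x^{-(p+2)}$ dominate for large and small $x$ respectively, absorbing the jump contribution $\lambda((1+\alpha_3)^{\pm p}-1)x^{\pm p}$).

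It then remains to control the mismatch $\Phi-\mathcal{L}U(\bar x_{\Delta})$, which consists of the differences $U_x(x_{\Delta})-U_x(\bar x_{\Delta})$, $U_{xx}(x_{\Delta})-U_{xx}(\bar x_{\Delta})$ and the analogous jump difference, each multiplied by coefficients bounded on the confinement region $(1/n,(1+\alpha_3)n)$. Since $U,U_x,U_{xx}$ are locally Lipschitz on this region, these differences are dominated by (an $n$-dependent constant times) $|x_{\Delta}(s)-\bar x_{\Delta}(s)|$. Invoking Lemma \ref{eq:L6} with $p=1$ together with $|\bar x_{\Delta}(s)|\le n$ on $\{s<\varsigma_{\Delta,n}\}$ gives $\mathbb{E}|x_{\Delta}(s)-\bar x_{\Delta}(s)|\le \mathfrak{C}_2\,\Delta^{1/2}\psi(\Delta)\,n\le \mathfrak{C}_2 n\,\Delta^{1/4}$, using \eqref{eq:18}; hence for each fixed $n$ the whole mismatch integral can be made $\le 1$ by further shrinking $\Delta$. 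Collecting the two estimates yields $\mathbb{E}[U(x_{\Delta}(T\wedge\varsigma_{\Delta,n}))]\le U(\xi(0))+KT+1=:C_0$, with $C_0$ independent of $n$.

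Finally I would read off the probability bound. On $\{\varsigma_{\Delta,n}\le T\}$ the solution reaches $\{1/n\}\cup[n,\infty)$ at $\varsigma_{\Delta,n}$ (an upward Poisson jump may overshoot $n$, but $U$ is still $\ge n^{p}$ there, while the lower boundary is reached continuously since the jumps are positive), so $U(x_{\Delta}(\varsigma_{\Delta,n}))\ge n^{p}$, whence
\begin{equation*}
n^{p}\,\mathbb{P}(\varsigma_{\Delta,n}\le T)\le \mathbb{E}[U(x_{\Delta}(T\wedge\varsigma_{\Delta,n}))]\le C_0,
\qquad\text{i.e.}\qquad \mathbb{P}(\varsigma_{\Delta,n}\le T)\le \frac{C_0}{n^{p}}.
\end{equation*}
Since $C_0$ does not depend on $n$, I first pick $n=n(\epsilon)$ with $C_0/n^{p}\le\epsilon$, and then pick $\bar\Delta=\bar\Delta(\epsilon)$ small enough that both $\mu^{-1}(\psi(\Delta))>n$ and the mismatch bound hold for all $\Delta\in(0,\bar\Delta]$; this delivers \eqref{eq:27}. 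The main obstacle is precisely the uniform-in-$n$ control of the Lyapunov bound while the derivatives of $U$ and the step-frozen coefficients live at different points; the resolution is the quantifier order, which lets the truncation switch off on $(1/n,n)$ so that the true, $n$-free Khasminskii estimate of Lemma \ref{eq:L1} applies, with Lemma \ref{eq:L6} absorbing the interpolation error.
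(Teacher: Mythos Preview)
Your proof is correct and follows the same Khasminskii template as the paper: apply It\^o to a Lyapunov function that blows up at both ends of $(0,\infty)$, bound the ``matched'' jump--diffusion generator by a constant independent of $n$, absorb the interpolation mismatch via Lemma~\ref{eq:L6}, and read off the exit probability. Two tactical differences are worth recording. First, the paper uses $H(x)=x^{\phi}-1-\phi\log x$ (the function from Lemma~\ref{eq:L0}) rather than your $U(x)=x^{p}+x^{-p}$; both work, and your choice makes the boundary inequality $U\ge n^{p}$ immediate. Second, the paper decomposes the other way round: it keeps $H_x,H_{xx}$ at $x_{\Delta}$ and moves the \emph{coefficients} $f_{\Delta},g_{\Delta},\varphi,h$ from $\bar x_{\Delta}$ to $x_{\Delta}$ using Lemma~\ref{eq:L2} and Assumption~\ref{eq:a3*}. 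Because it also moves the delay argument of $\varphi$ from $\bar x_{\Delta}(s-\tau)$ to $x_{\Delta}(s-\tau)$, it picks up an extra term on $\{s-\tau<0\}$ handled by Assumption~\ref{eq:a3}. Your decomposition leaves $\varphi(\bar x_{\Delta}(s-\tau),\cdot)$ untouched and only uses $\varphi\le\sigma$, so neither Assumption~\ref{eq:a3*} nor Assumption~\ref{eq:a3} is actually needed in your argument --- a small gain. One technical caveat: the pathwise bound ``$|\bar x_{\Delta}(s)|\le n$ on $\{s<\varsigma_{\Delta,n}\}$'' is used inside a conditional expectation given $\mathcal F_{k(s)}$, to which that event is not adapted; the clean fix (also what the paper does) is to enlarge the mismatch integral to $[0,T]$ and replace the pathwise bound by the moment bound of Lemma~\ref{eq:L7}, which changes nothing in your conclusion.
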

\begin{proof}
Let $H(\cdot)$ be the Lyapunov function in \eqref{eq:13}. Then for $t\in [0,T]$, the It\^{o} formula gives us
\begin{align*}
\mathbb{E}(H(x_{\Delta}(t\wedge \varsigma_{\Delta,n}))-H(\xi(0)))&=\mathbb{E}\int_{0}^{t\wedge\varsigma_{\Delta,n}}\Big[H_x(x_{\Delta}(s^-))f_{\Delta}(\bar{x}_{\Delta}(s^-),\bar{r}_{\Delta}(s))\\
&+\frac{1}{2}H_{xx}(x_{\Delta}(s^-))\varphi(\bar{x}_{\Delta}((s-\tau)^-),\bar{r}_{\Delta}(s))^2g_{\Delta}(\bar{x}_{\Delta}(s^-))^2\\
&+\lambda\Big(H(x_{\Delta}(s^-)+h(\bar{x}_{\Delta}(s^-),\bar{r}_{\Delta}(s)))-H(x_{\Delta}(s^-))\Big)\Big]ds.
\end{align*}
For $s\in[0,t\wedge\varsigma_{\Delta,n}]$, we can expand to have 
\begin{align*}
&H_x(x_{\Delta}(s^-))f_{\Delta}(\bar{x}_{\Delta}(s^-),\bar{r}_{\Delta}(s))+\frac{1}{2}H_{xx}(x_{\Delta}(s^-))\varphi(\bar{x}_{\Delta}((s-\tau)^-),\bar{r}_{\Delta}(s))^2g_{\Delta}(\bar{x}_{\Delta}(s^-))^2\\
&+\lambda\Big(H(x_{\Delta}(s^-)+h(\bar{x}_{\Delta}(s^-),\bar{r}_{\Delta}(s)))-H(x_{\Delta}(s^-))\leq \mathcal{L}(x_{\Delta}(s^-),x_{\Delta}((s-\tau)^-),\bar{r}_{\Delta}(s))+\mathcal{J}_{21}+\mathcal{J}_{22}+\mathcal{J}_{23},
\end{align*}
where $\mathcal{L}H$ is the operator in \eqref{eq:5}, which now takes the form
\begin{align*}
\mathcal{L}(x_{\Delta}(s^-),x_{\Delta}((s-\tau)^-),\bar{r}_{\Delta}(s))
&= H_x(x_{\Delta}(s^-))f_{\Delta}(x_{\Delta}(s^-),\bar{r}_{\Delta}(s))\\
&+\frac{1}{2}H_{xx}\Big(x_{\Delta}(s^-))\varphi(x_{\Delta}((s-\tau)^-),\bar{r}_{\Delta}(s))^2g_{\Delta}(x_{\Delta}(s^-))^2\\
&+\lambda(H(x_{\Delta}(s^-)+h(x_{\Delta}(s^-),\bar{r}_{\Delta}(s))-H(x_{\Delta}(s^-)))\Big)
\end{align*}
with $H$ independent of $t$ and
\begin{align*}
\mathcal{J}_{21}&= H_x(x_{\Delta}(s^-))\Big(f_{\Delta}(\bar{x}_{\Delta}(s^-),\bar{r}_{\Delta}(s))-f_{\Delta}(x_{\Delta}(s^-),\bar{r}_{\Delta}(s))\Big)\\
\mathcal{J}_{22}&=\frac{1}{2} H_{xx}(x_{\Delta}(s^-))\Big(\varphi(\bar{x}_{\Delta}((s-\tau)^-),\bar{r}_{\Delta}(s))^2g_{\Delta}(\bar{x}_{\Delta}(s^-))^2-\varphi(x_{\Delta}((s-\tau)^-),\bar{r}_{\Delta}(s))^2g_{\Delta}(x_{\Delta}(s^-))^2\Big)\\
\mathcal{J}_{23}&=\lambda\Big(H(x_{\Delta}(s^-)+h(\bar{x}_{\Delta}(s^-),\bar{r}_{\Delta}(s)))-H(x_{\Delta}(s^-)+h(x_{\Delta}(s^-),\bar{r}_{\Delta}(s)))\Big).
\end{align*}
By Assumptions \ref{eq:a1} and \ref{eq:a2}, we can find a constant $\mathcal{K}_6$ such that
\begin{align}\label{eq:29}
\mathcal{L}(x_{\Delta}(s^-),x_{\Delta}((s-\tau)^-),\bar{r}_{\Delta}(s))\le \mathcal{K}_6.
\end{align}
Recalling from the definition of $f_{\Delta}$ and $g_{\Delta}$, we note for  $s\in[0,t\wedge\varsigma_{\Delta,n}]$ 
\begin{align*}
f_{\Delta}(x_{\Delta}(s^-),\bar{r}_{\Delta}(s))= f(x_{\Delta}(s^-),\bar{r}_{\Delta}(s)) \text{ and } g_{\Delta}(x_{\Delta}(s^-))=g(x_{\Delta}(s^-)).
\end{align*}
So for $s\in[0,t\wedge\varsigma_{\Delta,n}]$, we obtain from Lemma \ref{eq:L2} that
\begin{align*}
\mathcal{J}_{21}&\le K_nH_x(x_{\Delta}(s^-))\vert \bar{x}_{\Delta}(s^-)-x_{\Delta}(s^-)\vert.
\end{align*}
Moreover, for $s\in[0,t\wedge\varsigma_{\Delta,n}]$ and any $\bar{x}_{\Delta}(s^-),x_{\Delta}(s^-)\in [1/n,n]$, we note from \eqref{eq:17} that 
\begin{align*}
g(\bar{x}_{\Delta}(s^-))\vee g(x_{\Delta}(s^-))\le \mu(n).
\end{align*}
So for $s\in[0,t\wedge\varsigma_{\Delta,n}]$, we now obtain from Assumptions \ref{eq:a1} and \ref{eq:a3*}, and Lemma \ref{eq:L2}
\begin{align*}
\mathcal{J}_{22}&\le \frac{1}{2} H_{xx}(x_{\Delta}(s^-))\Big[g(x_{\Delta}(s^-))^2\Big( \varphi(\bar{x}_{\Delta}((s-\tau)^-),\bar{r}_{\Delta}(s))^2-\varphi(x_{\Delta}((s-\tau)^-),\bar{r}_{\Delta}(s))^2\Big)\\
&+ \varphi(x_{\Delta}((s-\tau)^-),\bar{r}_{\Delta}(s))^2\Big(g(\bar{x}_{\Delta}(s^-))^2- g(x_{\Delta}(s^-))^2\Big)\Big]\\
&\le H_{xx}(x_{\Delta}(s^-))\Big[L_n\sigma(\mu(n))^2\vert \bar{x}_{\Delta}(s-\tau)^--x_{\Delta}(s-\tau)^-\vert+K_n\sigma^2\mu(n))\vert \bar{x}_{\Delta}(s^-)-x_{\Delta}(s^-)\vert\Big].
\end{align*}
Also for $s\in[0,t\wedge\varsigma_{\Delta,n}]$, we obtain from the Lyapunov function in \eqref{eq:13} and the mean value theorem that
\begin{align*}
\mathcal{J}_{23}&\le \lambda\Big[\Big(x_{\Delta}(s^-)+h(\bar{x}_{\Delta}(s^-),\bar{r}_{\Delta}(s))\Big)^{{\phi}}-{\phi}\log\Big(x_{\Delta}(s^-)+h(\bar{x}_{\Delta}(s^-),\bar{r}_{\Delta}(s))\Big)\\
&-\Big(x_{\Delta}(s^-)+h(x_{\Delta}(s^-),\bar{r}_{\Delta}(s))\Big)^{{\phi}}+{\phi}\log\Big(x_{\Delta}(s^-)+h(x_{\Delta}(s^-),\bar{r}_{\Delta}(s))\Big)\Big]\\
&\le \lambda\Big[\Big(x_{\Delta}(s^-)+\alpha_3(i)\bar{x}_{\Delta}(s^-)\Big)^{{\phi}}-\Big(x_{\Delta}(s^-)+\alpha_3(i)x_{\Delta}(s^-)\Big)^{{\phi}}\\
&+{\phi}\log\Big(x_{\Delta}(s^-)+\alpha_3(i)x_{\Delta}(s^-)\Big)
-{\phi}\log\Big(x_{\Delta}(s^-)+\alpha_3(i)\bar{x}_{\Delta}(s^-)\Big)\Big]\\
&\le n\lambda\vert x_{\Delta}(s^-)+\alpha_3(i)\bar{x}_{\Delta}(s^-)-x_{\Delta}(s^-)-\alpha_3(i)x_{\Delta}(s^-)\vert\\
&+n\lambda\phi\vert x_{\Delta}(s^-)+\alpha_3(i)x_{\Delta}(s^-)-x_{\Delta}(s^-)-\alpha_3(i)\bar{x}_{\Delta}(s^-)\vert\\
&\le n\lambda\alpha_3(1+\phi)\vert \bar{x}_{\Delta}(s^-)-x_{\Delta}(s^-)\vert,
\end{align*}
where $\alpha_3=\max_{i\in \mathcal{S}}\alpha_3(i)$. Combining $\mathcal{J}_{21}$, $\mathcal{J}_{22}$ and $\mathcal{J}_{23}$ with \eqref{eq:29}, we now have
\begin{align*}
\mathbb{E}(H(x_{\Delta}(t\wedge \varsigma_{\Delta,n}))&\le H(\xi(0)))+\mathcal{K}_6T+K_n\mathbb{E}\int_{0}^{t\wedge\varsigma_{\Delta,n}} H_x(x_{\Delta}(s^-))\vert \bar{x}_{\Delta}(s^-)-x_{\Delta}(s^-)\vert ds\\
&+ \mathbb{E}\int_{0}^{t\wedge\varsigma_{\Delta,n}} H_{xx}(x_{\Delta}(s^-))\Big[L_n\sigma(\mu(n))^2\vert \bar{x}_{\Delta}(s-\tau)^--x_{\Delta}(s-\tau)^-\vert\\
&+K_n\sigma^2\mu(n)\vert \bar{x}_{\Delta}(s^-)-x_{\Delta}(s^-)\vert\Big]ds+n\lambda\alpha_3(1+\phi)\mathbb{E}\int_{0}^{t\wedge\varsigma_{\Delta,n}}\vert \bar{x}_{\Delta}(s^-)-x_{\Delta}(s^-)\vert ds\\
&\leq K7+ \mathcal{K}_8\mathbb{E}\int_{-\tau}^{0}|\xi([s/\Delta]\Delta)-\xi(s)|ds+(\mathcal{K}_8+\mathcal{K}_9)\int_{0}^T\mathbb{E}\Big(\mathbb{E}|\bar{x}_{\Delta}(s)-x_{\Delta}(s)|^p\Big|\mathcal{F}_{k(s)}\Big)^{1/p}ds
\end{align*}
where
\begin{align*}
\mathcal{K}_7=H(\xi(0))+ \mathcal{K}_6T \text{, }\mathcal{K}_8&=\max_{1/n\leq x \leq n}\{H_{xx}(x)\sigma (\mu(n))^2L_n\}
\end{align*}
and
\begin{align*}
\mathcal{K}_9&=\max_{1/n\leq x \leq n}\{H_x(x)K_n+H_{xx}(x)\sigma^2\mu(n) K_n+n\lambda\alpha_3(1+\phi)\}.
\end{align*}
So by Lemmas \ref{eq:L6} and \ref{eq:L6}, we now obtain
\begin{align*}
\mathbb{E}(H(x_{\Delta}(t\wedge \varsigma_{\Delta,n})))&\leq \mathcal{K}_7+ \mathcal{K}_3\mathcal{K}_8T\Delta^{\Upsilon}+(\mathcal{K}_8+\mathcal{K}_9)\mathfrak{C}_1^{1/p} \Big(\Delta^{p/2}(\psi(\Delta))^p+\Delta\Big)^{1/p}\\
&\times \int_{0}^{T}(\sup_{0\leq u \leq s}(\mathbb{E}|\bar{x}_{\Delta}(u)|^p))^{1/p}ds\\
&\leq \mathcal{K}_7+ \mathcal{K}_3\mathcal{K}_8T\Delta^{\Upsilon}+(\mathcal{K}_8+\mathcal{K}_9)\mathfrak{C}_1^{1/p} \Big(\Delta^{p/2}(\psi(\Delta))^p+\Delta\Big)^{1/p}c_3^{1/p}T.
\end{align*}
This implies
\begin{equation}\label{eq:30}
\mathbb{P}(\varsigma_{\Delta,n}\leq T)\leq \frac{ \mathcal{K}_7+ \mathcal{K}_3\mathcal{K}_8T\Delta^{\Upsilon}+(\mathcal{K}_8+\mathcal{K}_9)\mathfrak{C}_1^{1/p} \Big(\Delta^{p/2}(\psi(\Delta))^p+\Delta\Big)^{1/p}c_3^{1/p}T}{H(1/n)\wedge H(n)}.
\end{equation}
For any $\epsilon\in(0,1)$, we may select sufficiently large $n$ such that
\begin{equation}\label{eq:31}
\frac{\mathcal{K}_7}{H(1/n)\wedge H(n)}\leq \frac{\epsilon}{2}
\end{equation}
and sufficiently small of each step size $\Delta\in (0,\bar{\Delta}]$ such that
\begin{equation}\label{eq:32}
\frac{\mathcal{K}_3\mathcal{K}_8T\Delta^{\Upsilon}+(\mathcal{K}_8+\mathcal{K}_9)\mathfrak{C}_1^{1/p} \Big(\Delta^{p/2}(\psi(\Delta))^p+\Delta\Big)^{1/p}c_3^{1/p}T}{H(1/n)\wedge H(n)}\leq \frac{\epsilon}{2}.
\end{equation}
We can now combine \eqref{eq:31} and \eqref{eq:32} to obtain the required assertion.
\end{proof}
The following lemma shows the truncated EM scheme converges strongly in finite time.
\begin{lemma}\label{eq:L9}
Let Assumptions \ref{eq:a1}, \ref{eq:a3*}, \ref{eq:a2} and \ref{eq:a3} hold. Set
\begin{equation*}
 \vartheta_{\Delta,n}= \tau_n\wedge \varsigma_{\Delta,n},
\end{equation*}
where $\tau_n$ and $\varsigma_{\Delta,n}$ are \eqref{eq:12} and \eqref{eq:28}. Then for any $p\geq 2$, $T> 0$, we have for any sufficiently large $n$ and any $\Delta\in (0,\Delta^*]$, 
\begin{equation}\label{eq:33}
\mathbb{E}\Big( \sup_{0\leq t \leq T}|x_{\Delta}(t \wedge \vartheta_{\Delta,n})-x(t \wedge \vartheta_{\Delta,n})|^p  \Big)\leq \mathcal{C}\Big((\Delta+o(\Delta))(\psi(\Delta))^p) \vee \Delta^{p(1/4\wedge \Upsilon \wedge 1/p)}\Big)
\end{equation}
where $\mathcal{C}$ is a constant independent of $\Delta$ and consequently,
\begin{equation}\label{eq:34}
\lim_{\Delta\rightarrow 0}\mathbb{E}\Big( \sup_{0\leq t \leq T}|x_{\Delta}(t \wedge \vartheta_{\Delta,n})-x(t \wedge \vartheta_{\Delta,n})|^p \Big)=0.
\end{equation}
\end{lemma}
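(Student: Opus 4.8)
The plan is to apply the generalised It\^{o} formula to the error process $e_{\Delta}(t):=x_{\Delta}(t)-x(t)$ stopped at $\vartheta_{\Delta,n}$, and then close a Gronwall-type inequality. The key point is that on the stopped interval $[0,\vartheta_{\Delta,n}]$ both the exact solution $x$ and the numerical solution $x_{\Delta}$ stay inside $(1/n,n)$, so by the definition of the truncated coefficients we have $f_{\Delta}(x_{\Delta},i)=f(x_{\Delta},i)$ and $g_{\Delta}(x_{\Delta})=g(x_{\Delta})$ there, and the \emph{local} Lipschitz bounds of Lemma \ref{eq:L2} and Assumption \ref{eq:a3*} with constants $K_n,L_R$ become available. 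First I would write $e_{\Delta}$ as the sum of the drift, diffusion and jump integrals of the differences $f-f_{\Delta}$, $\varphi g-\varphi g_{\Delta}$ and $h-h$ evaluated along $x$ versus along the step process $\bar{x}_{\Delta}$.

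Next I would split each difference through the intermediate point $\bar{x}_{\Delta}$, exactly as in the proof of Lemma \ref{eq:L8}: a piece comparing the coefficient at $\bar{x}_{\Delta}$ with the coefficient at $x_{\Delta}$ (controlled by $K_n|\bar{x}_{\Delta}-x_{\Delta}|$ and $L_n|\bar{x}_{\Delta}(\cdot-\tau)-x_{\Delta}(\cdot-\tau)|$ via the local Lipschitz estimates), and a piece comparing $x_{\Delta}$ with the true $x$ (which is the error $e_{\Delta}$ itself). For the $p$-th moment of the supremum I would apply the Burkholder--Davis--Gundy inequality to the diffusion martingale term and the corresponding moment inequality for the compensated Poisson integral, reducing everything to time integrals of $\mathbb{E}\sup_{0\le u\le s}|e_{\Delta}(u\wedge\vartheta_{\Delta,n})|^p$ plus the one-step deviation $\mathbb{E}|x_{\Delta}-\bar{x}_{\Delta}|^p$. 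The latter is bounded by Lemma \ref{eq:L6}, which supplies the factor $\bigl(\Delta^{p/2}(\psi(\Delta))^p+\Delta\bigr)$, while the delay term over $[-\tau,0]$ is handled by the H\"older continuity of the initial data in Assumption \ref{eq:a3}, giving the $\Delta^{p\Upsilon}$ contribution. Combining, I would arrive at
\begin{equation*}
\mathbb{E}\Big(\sup_{0\le u\le t}|e_{\Delta}(u\wedge\vartheta_{\Delta,n})|^p\Big)\le \mathcal{C}_n\Big((\Delta+o(\Delta))(\psi(\Delta))^p\vee \Delta^{p(1/4\wedge\Upsilon\wedge 1/p)}\Big)+C\int_0^t \mathbb{E}\Big(\sup_{0\le u\le r}|e_{\Delta}(u\wedge\vartheta_{\Delta,n})|^p\Big)\,dr,
\end{equation*}
and the Gronwall inequality then yields \eqref{eq:33}; letting $\Delta\to 0$ gives \eqref{eq:34}, since $\Delta^{1/4}\psi(\Delta)\le 1$ forces each of the displayed rate terms to vanish.

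The main obstacle I expect is the bookkeeping in the diffusion term, where one must control $\varphi(\cdot)^2 g(\cdot)^2$ differences: the factorisation $\varphi^2 g^2-\bar{\varphi}^2\bar{g}^2$ into a $g$-difference piece and a $\varphi$-difference (delayed) piece must be done carefully so that the super-linear growth of $g$ is tamed by the stopping at level $n$ (where $g\le \mu(n)$ by \eqref{eq:17}) and by the uniform moment bounds of Lemmas \ref{eq:L1} and \ref{eq:L7}, which keep all $n$-dependent constants finite before the final $\Delta\to0$ limit. A secondary subtlety is that $\Delta^{p/2}(\psi(\Delta))^p$ need not dominate $\Delta$ for all $p$, which is precisely why the rate in \eqref{eq:33} is stated as a maximum rather than a single power; I would verify the exponent comparison $p/2-\varkappa p\ge 0$ with $\varkappa=1/4$ (as already noted in \eqref{eq:h2}) to ensure the truncation error is of the advertised order.
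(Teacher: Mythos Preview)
Your decomposition handles the spatial mismatch $\bar{x}_{\Delta}$ versus $x_{\Delta}$ versus $x$ correctly, but it omits the regime-switching mismatch entirely: the numerical scheme uses the \emph{frozen} Markov state $\bar{r}_{\Delta}(s)=r_{\Delta}(t_k)$, whereas the exact solution is driven by the \emph{true} state $r(s)$. Splitting only through the intermediate spatial point $\bar{x}_{\Delta}$ still leaves you with a term like $f(x_{\Delta}(s^-),\bar{r}_{\Delta}(s))-f(x(s^-),r(s))$, and the local Lipschitz bound $K_n|x_{\Delta}-x|$ from Lemma \ref{eq:L2} applies only when the two evaluations share the same state $i\in\mathcal{S}$. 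Without a separate estimate for the state difference you cannot close the Gronwall loop, and in particular you cannot produce the factor $(\Delta+o(\Delta))(\psi(\Delta))^p$ in \eqref{eq:33}: that term does \emph{not} come from Lemma \ref{eq:L6} (which yields only the $\Delta^{p(1/4\wedge\Upsilon\wedge 1/p)}$ piece after using $\Delta^{1/4}\psi(\Delta)\le 1$).

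The missing ingredient, which the paper inserts in each of the drift, diffusion and jump terms, is a second split through the state variable, e.g.\ for the drift
\[
f_{\Delta}(\bar{x}_{\Delta},\bar{r}_{\Delta})-f(x,r)=\bigl[f_{\Delta}(\bar{x}_{\Delta},\bar{r}_{\Delta})-f_{\Delta}(\bar{x}_{\Delta},r)\bigr]+\bigl[f(\bar{x}_{\Delta},r)-f(x,r)\bigr],
\]
and the first bracket is bounded by $2(\psi(\Delta))^p\mathbf{1}_{\{r(s)\neq r(t_k)\}}$, whose expectation is controlled by the Markov transition estimate $\mathbb{E}[\mathbf{1}_{\{r(s)\neq r(t_k)\}}\mid r(t_k)]\le \bar{c}_1\Delta+o(\Delta)$ coming from the generator $\Gamma$ (cf.\ \eqref{markov2} in the paper). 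This conditional-independence argument, repeated for $\varphi$, $g_{\Delta}$ and $h$, is exactly what produces the $(\Delta+o(\Delta))(\psi(\Delta))^p$ rate. Once you add this step, the remainder of your outline (BDG for the diffusion, compensated-Poisson martingale inequality for the jump, Lemma \ref{eq:L6} for $|\bar{x}_{\Delta}-x_{\Delta}|$, Assumption \ref{eq:a3} for the initial segment, then Gronwall) matches the paper's argument.
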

\begin{proof}
For $t_1\in [0,T]$, we obtain from \eqref{eq:4} and \eqref{eq:24} that
\begin{align}\label{eq:result2}
\mathbb{E}\Big( \sup_{0\leq t \leq t_1}|x_{\Delta}(t \wedge \vartheta_{\Delta,n})-x(t \wedge\vartheta_{\Delta,n})|^p\Big)&\le \mathcal{J}_{31}+\mathcal{J}_{32}+\mathcal{J}_{33}.
\end{align}
where
\begin{align*}
\mathcal{J}_{31}&=3^{p-1}\mathbb{E}\Big( |\int_{0}^{t_1\wedge \vartheta_{\Delta,n}}[f_{\Delta}(\bar{x}_{\Delta}(s^-),\bar{r}_{\Delta}(s))-f(x(s^-),r(s))]ds|^p\Big),\\
\mathcal{J}_{32}&=3^{p-1} \mathbb{E}\Big( \sup_{0\leq t \leq t_1}|\int_{0}^{t\wedge \vartheta_{\Delta,n}}[\varphi(\bar{x}_{\Delta}((s-\tau)^-),\bar{r}_{\Delta}(s))g_{\Delta}(\bar{x}_{\Delta}(s^-))\\
&-\varphi(x((s-\tau)^-),r(s))g(x(s^-))]dB(s)|^p\Big)\\
\mathcal{J}_{33}&= 3^{p-1}\mathbb{E}\Big( \sup_{0\leq t \leq t_1}|\int_{0}^{t\wedge \vartheta_{\Delta,n}}[h(\bar{x}_{\Delta}(s^-),\bar{r}_{\Delta}(s))-h(x(s^-),r(s))]dN(s)|^p\Big).
\end{align*}
By the  H\"older and elementary inequalities, we compute
\begin{align*}
\mathcal{J}_{31}&\le \mathcal{J}_{311}+\mathcal{J}_{312},
\end{align*}
where
\begin{align*}
 \mathcal{J}_{311}&=3^{p-1}2^{p-1}T^{p-1}\mathbb{E} \int_{0}^{t_1\wedge \vartheta_{\Delta,n}}|f_{\Delta}(\bar{x}_{\Delta}(s^-),r(s))-f(x(s^-),r(s))|^pds\\
\mathcal{J}_{312}&=3^{p-1}2^{p-1}T^{p-1}\mathbb{E}\int_{0}^{t_1\wedge \vartheta_{\Delta,n}}|f_{\Delta}(\bar{x}_{\Delta}(s^-),\bar{r}_{\Delta}(s))-f_{\Delta}(\bar{x}(s^-),r(s))|^pds.
\end{align*}
It is clear from the definition of the truncated function $f_{\Delta}$ that $f_{\Delta}(\bar{x}_{\Delta}(s^-),\bar{r}_{\Delta}(s))=f(\bar{x}_{\Delta}(s^-),\bar{r}_{\Delta}(s))$ for $s\in [0,t_1\wedge \vartheta_{\Delta,n}]$ So by Lemma \ref{eq:L2},
\begin{align*}
\mathcal{J}_{311}&=3^{p-1}2^{p-1}T^{p-1}\mathbb{E} \int_{0}^{t_1\wedge \vartheta_{\Delta,n}}|f(\bar{x}_{\Delta}(s^-),r(s))-f(x(s^-),r(s))|^pds\\
&\le 3^{p-1}2^{p-1}T^{p-1}K_n^p\mathbb{E}\int_{0}^{t_1\wedge \vartheta_{\Delta,n}}|\bar{x}_{\Delta}(s^-)-x(s^-)|^pds.
\end{align*}
Let $n=[T/\Delta]$ be the integer part of $T/\Delta$. Then 
\begin{align*}
\mathcal{J}_{312}&=3^{p-1}2^{p-1}T^{p-1}\mathbb{E}\int_{0}^{T}|f_{\Delta}(\bar{x}_{\Delta}(s^-),\bar{r}_{\Delta}(s))-f_{\Delta}(\bar{x}(s^-),r(s))|^pds\\
&=3^{p-1}2^{p-1}T^{p-1}\sum_{k=0}^{n}\mathbb{E}\int_{t_k}^{t_{k+1}}|f_{\Delta}(\bar{x}_{\Delta}(t_k),r(t_k))-f_{\Delta}(\bar{x}(t_k),r(s))|^pds,
\end{align*}
with $t_{n+1}$ now set to be $T$. We now have from \eqref{eq:19}
\begin{align}\label{markov1}
\mathcal{J}_{312}&\le 3^{p-1}2^{2(p-1)}T^{p-1}\sum_{k=0}^{n}\mathbb{E}\int_{t_k}^{t_{k+1}}[|f_{\Delta}(\bar{x}_{\Delta}(t_k),r(t_k))|^p+|f_{\Delta}(\bar{x}(t_k),r(s))|^p]1_{\{r(s)\neq r(t_k)\}}ds\nonumber\\
&\le 3^{p-1}2^{2(p-1)}T^{p-1}\sum_{k=0}^{n}\int_{t_k}^{t_{k+1}}\mathbb{E}\Big[\mathbb{E}[((\psi(\Delta))^p+(\psi(\Delta))^p)1_{\{r(s)\neq r(t_k)\}}|r(t_k)]\Big]ds\nonumber\\
&= 3^{p-1}2^{2(p-1)}T^{p-1}\sum_{k=0}^{n}\int_{t_k}^{t_{k+1}}\mathbb{E}\Big[\mathbb{E}[2(\psi(\Delta))^p|r(t_k)]\mathbb{E}[1_{\{r(s)\neq r(t_k)\}}|r(t_k)]\Big]ds,
\end{align}
where we use the fact that $\bar{x}(s)$ and $1_{r(s)\neq r(t_k)}$ are conditionally independent with respect to the $\sigma-$algebra generated by $r(t_k)$ in the last step. By the Markov property, we compute
\begin{align}\label{markov2}
\mathbb{E}[1_{\{r(s)\neq r(t_k)\}}|r(t_k)]&=\sum_{i\in \mathcal{S}}1_{\{r(t_k)=i\}}\mathbb{P}(r(s)\neq i|r(t_k)=i)\nonumber\\
&=\sum_{i\in \mathcal{S}}1_{\{r(t_k)=i\}}\sum_{i\neq j}(\gamma_{ij}(s-t_k)+o(s-t_k))\nonumber\\
&\le (\max_{1\le i \le N}(-\gamma_{ij})\Delta+o(\Delta))\sum_{i\in \mathcal{S}}1_{\{r(t_k)=i\}}\nonumber\\
&\le \bar{c}_1\Delta+o(\Delta).
\end{align}
where $\bar{c}_1=\max_{1\le i \le N}(-\gamma_{ij})$. By Lemma \ref{eq:L7}, we note
\begin{align*}
\mathbb{E}\int_{t_k}^{t_{k+1}}|f_{\Delta}(\bar{x}_{\Delta}(t_k),r(t_k))-f(\bar{x}(t_k),r(s))|^pds &\le 2(\bar{c}_1\Delta+o(\Delta)) \int_{t_k}^{t_{k+1}}(\psi(\Delta))^pds\\
&\le 2(\bar{c}_1\Delta+o(\Delta))\Delta(\psi(\Delta))^p.
\end{align*}
This implies
\begin{align*}
\mathbb{E}\int_{0}^{T}|f_{\Delta}(\bar{x}_{\Delta}(s^-),\bar{r}_{\Delta}(s))-f(\bar{x}(s^-),r(s))|^pds&\le 2(\bar{c}_1\Delta+o(\Delta))(\psi(\Delta))^p
\end{align*}
and consequently 
\begin{align*}
\mathbb{E}\int_{0}^{t_1\wedge \vartheta_{\Delta,n}}|f_{\Delta}(\bar{x}_{\Delta}(s^-),\bar{r}_{\Delta}(s))-f(\bar{x}(s^-),r(s))|^pds&\le 2(\bar{c}_1\Delta+o(\Delta))(\psi(\Delta))^p
\end{align*}
Substituting this into $\mathcal{J}_{312}$ yields 
\begin{align*}
\mathcal{J}_{312}&\le 3^{p-1}2^{2p-1}T^{p-1}(\bar{c}_1\Delta+o(\Delta))(\psi(\Delta))^p.
\end{align*}
We then combine $\mathcal{J}_{311}$ and $\mathcal{J}_{312}$ to obtain 
\begin{align*}
\mathcal{J}_{31}&\le \bar{c}_2(\bar{c}_1\Delta+o(\Delta))(\psi(\Delta))^p+\bar{c}_3\mathbb{E}\int_{0}^{t_1\wedge \vartheta_{\Delta,n}}|\bar{x}_{\Delta}(s^-)-x(s^-)|^pds,
\end{align*}
where 
\begin{align*}
\bar{c}_2&=3^{p-1}2^{2p-1}T^{p-1}\\
\bar{c}_3&=3^{p-1}2^{p-1}T^{p-1}K_n^p.
\end{align*} 
Also by the H\"older and Burkholder-Davis Gundy inequalities, we have
\begin{align*}
\mathcal{J}_{32}&\leq 3^{p-1}T^{\frac{p-2}{2}}C_1(p) \mathbb{E}\int_{0}^{t_1\wedge \vartheta_{\Delta,n}}\Big(|\varphi(\bar{x}_{\Delta}((s-\tau)^-),\bar{r}_{\Delta}(s))g_{\Delta}(\bar{x}_{\Delta}(s^-))\\
&-\varphi(x((s-\tau)^-),r(s))g_{\Delta}(\bar{x}_{\Delta}(s^-))+\varphi(x((s-\tau)^-),r(s))g_{\Delta}(\bar{x}_{\Delta}(s^-))\\
&-\varphi(x((s-\tau)^-),r(s))g(x(s^-))|^p\Big)ds\\
&\le \mathcal{J}_{321}+\mathcal{J}_{322},
\end{align*}
where 
\begin{align*}
\mathcal{J}_{321}&= 2^{p-1}3^{p-1}T^{\frac{p-2}{2}}C_1(p) \mathbb{E}\int_{0}^{t_1\wedge  \vartheta_{\Delta,n}}g_{\Delta}(\bar{x}_{\Delta}(s^-))^p|\varphi(\bar{x}_{\Delta}((s-\tau)^-),\bar{r}_{\Delta}(s))-\varphi(x((s-\tau)^-),r(s))|^pds\\
\mathcal{J}_{322}&=2^{p-1}3^{p-1}T^{\frac{p-2}{2}}C_1(p) \mathbb{E}\int_{0}^{t_1\wedge \vartheta_{\Delta,n}}\varphi(x((s-\tau)^-),r(s))^p|g_{\Delta}(\bar{x}_{\Delta}(s^-))-g(x(s^-))|^p ds.
\end{align*}
where $C_1(p)$ is a positive constant. For $s\in[0,t_1\wedge \vartheta_{\Delta,n}]$, we note from \eqref{eq:17} that $\bar{x}_{\Delta}(s^-)\in [1/n,n]$ and $g_{\Delta}(\bar{x}_{\Delta}(s^-))\le \mu(n)$. So we now have
\begin{align*}
\mathcal{J}_{321}&\le 2^{p-1}3^{p-1}T^{\frac{p-2}{2}}C_1(p) (\mu(n))^p\mathbb{E}\int_{0}^{t_1\wedge \vartheta_{\Delta,n}}|\varphi(\bar{x}_{\Delta}((s-\tau)^-),\bar{r}_{\Delta}(s))-\varphi(x((s-\tau)^-),r(s))|^pds\\
&\le \mathcal{J}_{323}+\mathcal{J}_{324},
\end{align*}
where
\begin{align*}
\mathcal{J}_{323}&=2^{2(p-1)}3^{p-1}T^{\frac{p-2}{2}}C_1(p) (\mu(n))^p\mathbb{E}\int_{0}^{t_1\wedge \vartheta_{\Delta,n}}|\varphi(\bar{x}_{\Delta}((s-\tau)^-),r(s))-\varphi(x((s-\tau)^-),r(s))|^pds\\
\mathcal{J}_{324}&=2^{2(p-1)}3^{p-1}T^{\frac{p-2}{2}}C_1(p) (\mu(n))^p\mathbb{E}\int_{0}^{t_1\wedge \vartheta_{\Delta,n}}|\varphi(\bar{x}_{\Delta}((s-\tau)^-),\bar{r}_{\Delta}(s))-\varphi(\bar{x}((s-\tau)^-),r(s))|^pds
\end{align*}
By Assumption \ref{eq:a3*}, we obtain
\begin{align*}
\mathcal{J}_{323}&\le 2^{2(p-1)}3^{p-1}T^{\frac{p-2}{2}}C_1(p) (\mu(n))^pL_n^p\mathbb{E}\int_{0}^{t_1\wedge \vartheta_{\Delta,n}}|\bar{x}_{\Delta}((s-\tau)^-)-x((s-\tau)^-)|^pds.
\end{align*}
Also as before, we compute
\begin{align*}
\mathcal{J}_{324}&=2^{2(p-1)}3^{p-1}T^{\frac{p-2}{2}}C_1(p) (\mu(n))^p\mathbb{E}\int_{0}^{T}|\varphi(\bar{x}_{\Delta}((s-\tau)^-),\bar{r}_{\Delta}(s))-\varphi(\bar{x}((s-\tau)^-),r(s))|^pds\\
&=2^{p-1}3^{p-1}T^{\frac{p-2}{2}}C_1(p) (\mu(n))^p\sum_{k=0}^{n}\mathbb{E}\int_{t_k}^{t_{k+1}}|\varphi(\bar{x}_{\Delta}((t_k-\tau)^-),\bar{r}_{\Delta}(t_k))-\varphi(\bar{x}((t_k-\tau)^-),r(s))|^pds,
\end{align*}
where $n$ is the usual integer part of $T/\Delta$ with $t_{n+1}$ set to be $T$. By elementary inequality, 
\begin{align*}
\mathcal{J}_{324}&\le 2^{2(p-1)}3^{p-1}T^{\frac{p-2}{2}}C_1(p) (\mu(n))^p\sum_{k=0}^{n}\mathbb{E}\int_{t_k}^{t_{k+1}}[|\varphi(\bar{x}_{\Delta}((t_k-\tau)^-),\bar{r}_{\Delta}(t_k))|^p\\
&+|\varphi(\bar{x}((t_k-\tau)^-),r(s))|^p1_{\{r(s)\neq r(t_k)\}}]ds\\
&= 2^{2(p-1)}3^{p-1}T^{\frac{p-2}{2}}C_1(p) (\mu(n))^p\sum_{k=0}^{n}\int_{t_k}^{t_{k+1}}\mathbb{E}\Big[\mathbb{E}[|\varphi(\bar{x}_{\Delta}((t_k-\tau)^-),\bar{r}_{\Delta}(t_k))|^p\\
&+|\varphi(\bar{x}((t_k-\tau)^-),r(s))|^p1_{\{r(s)\neq r(t_k)\}}|r(t_k)]\Big]ds\\
&= 2^{2(p-1)}3^{p-1}T^{\frac{p-2}{2}}C_1(p) (\mu(n))^p\sum_{k=0}^{n}\int_{t_k}^{t_{k+1}}\mathbb{E}\Big[\mathbb{E}[|\varphi(\bar{x}_{\Delta}((t_k-\tau)^-),\bar{r}_{\Delta}(t_k))|^p\\
&+|\varphi(\bar{x}((t_k-\tau)^-),r(s))|^p]\mathbb{E}[1_{\{r(s)\neq r(t_k)\}}|r(t_k)]\Big]ds
\end{align*}
We note from \eqref{markov2} that
\begin{align*}
\mathbb{E}[1_{\{r(s)\neq r(t_k)\}}|r(t_k)]&\le \bar{c}_1\Delta+o(\Delta).
\end{align*}
By Assumption \ref{eq:a1}, we have
\begin{align*}
\mathbb{E}\int_{t_k}^{t_{k+1}}|\varphi(\bar{x}_{\Delta}((t_k-\tau)^-),\bar{r}_{\Delta}(t_k))-\varphi(\bar{x}((t_k-\tau)^-),r(s))|^pds
&\le 2(\bar{c}_1\Delta+o(\Delta))\int_{t_k}^{t_{k+1}}\sigma^pds\\
&\le 2\sigma^p(\bar{c}_1\Delta+o(\Delta))\Delta.
\end{align*}
This means by Assumption \ref{eq:a1}, we have
\begin{align*}
\mathbb{E}\int_{0}^{T}|\varphi(\bar{x}_{\Delta}((s-\tau)^-),\bar{r}_{\Delta}(s))-\varphi(\bar{x}((s-\tau)^-),r(s))|^pds\le 2\sigma^p(\bar{c}_1\Delta+o(\Delta))
\end{align*}
and hence,
\begin{align*}
\mathbb{E}\int_{0}^{t_1\wedge \vartheta_{\Delta,n}}|\varphi(\bar{x}_{\Delta}((s-\tau)^-),\bar{r}_{\Delta}(s))-\varphi(\bar{x}((s-\tau)^-),r(s))|^pds\le 2\sigma^p(\bar{c}_1\Delta+o(\Delta)).
\end{align*}
Inserting this into $\mathcal{J}_{324}$ yields
\begin{align*}
\mathcal{J}_{324}\le 2^{2p-1}3^{p-1}T^{\frac{p-2}{2}}C_1(p) (\mu(n))^p\sigma^p(\bar{c}_1\Delta+o(\Delta)).
\end{align*}
We obtain from $\mathcal{J}_{323}$ and $\mathcal{J}_{324}$
\begin{align*}
\mathcal{J}_{321}&\le 2^{2p-1}3^{p-1}T^{\frac{p-2}{2}}C_1(p) (\mu(n))^p\sigma^p(\bar{c}_1\Delta+o(\Delta))\\
&+2^{2(p-1)}3^{p-1}T^{\frac{p-2}{2}}C_1(p) (\mu(n))^pL_n^p\mathbb{E}\int_{0}^{t_1\wedge \vartheta_{\Delta,n}}|\bar{x}_{\Delta}((s-\tau)^-)-x((s-\tau)^-)|^pds.
\end{align*}
Moreover, by Assumption \ref{eq:a1} and Lemma \ref{eq:L2}
\begin{align*}
\mathcal{J}_{322}&\le 2^{p-1}3^{p-1}T^{\frac{p-2}{2}}C_1(p)(\mu(n))^pK_n^p \mathbb{E}\int_{0}^{t_1\wedge \vartheta_{\Delta,n}}|\bar{x}_{\Delta}(s^-)-x(s^-)|^p ds.
\end{align*}
Combining $\mathcal{J}_{321}$ and $\mathcal{J}_{322}$, we have 
\begin{align*}
\mathcal{J}_{32}&\le \bar{c}_4(\bar{c}_1\Delta+o(\Delta))+\bar{c}_5\mathbb{E}\int_{0}^{t_1\wedge \vartheta_{\Delta,n}}|\bar{x}_{\Delta}((s-\tau)^-)-x((s-\tau)^-)|^pds\\
&+\bar{c}_6\mathbb{E}\int_{0}^{t_1\wedge \vartheta_{\Delta,n}}|\bar{x}_{\Delta}(s^-)-x(s^-)|^p ds,
\end{align*}
where
\begin{align*}
\bar{c}_4&=2^{2p-1}3^{p-1}T^{\frac{p-2}{2}}C_1(p) (\mu(n))^p\sigma^p\\
\bar{c}_5&=2^{2(p-1)}3^{p-1}T^{\frac{p-2}{2}}C_1(p) (\mu(n))^pL_n^p\\
\bar{c}_6&=2^{p-1}3^{p-1}T^{\frac{p-2}{2}}C_1(p)(\mu(n))^pK_n^p.
\end{align*}
Furthermore, by elementary inequality
\begin{align*}
\mathcal{J}_{33}&=3^{p-1}\mathbb{E}\Big( \sup_{0\leq t \leq t_1}|\int_{0}^{t\wedge \vartheta_{\Delta,n}}[h(\bar{x}_{\Delta}(s^-),\bar{r}_{\Delta}(s))-h(x(s^-),r(s))]d\widetilde{N}(s)\\
&+ \lambda\int_{0}^{t \wedge \vartheta_{\Delta,n}}[h(\bar{x}_{\Delta}(s^-),\bar{r}_{\Delta}(s))-h(x(s^-),r(s))]ds|^p\Big)\\
&\le \mathcal{J}_{331}+\mathcal{J}_{332},
\end{align*}
where
\begin{align*}
\mathcal{J}_{331}&=2^{p-1}3^{p-1}\mathbb{E}\Big( \sup_{0\leq t \leq t_1}|\int_{0}^{t\wedge \vartheta_{\Delta,n}}[h(\bar{x}_{\Delta}(s^-),\bar{r}_{\Delta}(s))-h(x(s^-),r(s))]d\widetilde{N}(s)|^p\Big)\\
\mathcal{J}_{332}&=2^{p-1}3^{p-1}\lambda^p\mathbb{E}\Big(\sup_{0\leq t \leq t_1}|\int_{0}^{t \wedge \vartheta_{\Delta,n}}[h(\bar{x}_{\Delta}(s^-),\bar{r}_{\Delta}(s))-h(x(s^-),r(s))]ds|^p\Big).
\end{align*}
By the Doob martingale inequality and martingale isometry, we have
\begin{align*}
\mathcal{J}_{331}&\le 2^{p-1}3^{p-1}\lambda^{\frac{p}{2}}C_2(p)\mathbb{E}\Big( \sup_{0\leq t \leq t_1}|\int_{0}^{t\wedge \vartheta_{\Delta,n}}|h(\bar{x}_{\Delta}(s^-),\bar{r}_{\Delta}(s))-h(x(s^-),r(s))|^2d\widetilde{N}(s)\Big)^{\frac{p}{2}}\\
&\le 2^{p-1}3^{p-1}\lambda^{p/2}T^{\frac{p-2}{2}}C_2(p)\mathbb{E}\int_{0}^{t\wedge \vartheta_{\Delta,n}}|h(\bar{x}_{\Delta}(s^-),\bar{r}_{\Delta}(s))-h(x(s^-),r(s))|^pds\\
&\le \mathcal{J}_{333}+\mathcal{J}_{334},
\end{align*}
where
\begin{align*}
\mathcal{J}_{333}&= 2^{p-1}3^{p-1}\lambda^{p/2}T^{\frac{p-2}{2}}C_2(p)\mathbb{E}\int_{0}^{t\wedge \vartheta_{\Delta,n}}|h(\bar{x}_{\Delta}(s^-),r(s))-h(x(s^-),r(s))|^pds\\
\mathcal{J}_{334}&= 2^{p-1}3^{p-1}\lambda^{p/2}T^{\frac{p-2}{2}}C_2(p)\mathbb{E}\int_{0}^{t\wedge \vartheta_{\Delta,n}}|h(\bar{x}_{\Delta}(s^-),\bar{r}_{\Delta}(s))-h(\bar{x}_{\Delta}(s^-),r(s))|^pds
\end{align*}
and $C_2(p)$ is a positive constant. By Lemma \ref{eq:L2},
\begin{align*}
\mathcal{J}_{333}&\le 2^{p-1}3^{p-1}\lambda^{p/2}T^{\frac{p-2}{2}}C_2(p)K_n^p\mathbb{E}\int_{0}^{t\wedge \vartheta_{\Delta,n}}|\bar{x}_{\Delta}(s^-)-x(s^-)|^pds.
\end{align*}
We also compute
\begin{align*}
\mathcal{J}_{334}&=2^{p-1}3^{p-1}\lambda^{p/2}T^{\frac{p-2}{2}}C_2(p)\mathbb{E}\int_{0}^{T}|h(\bar{x}_{\Delta}(s^-),\bar{r}_{\Delta}(s))-h(\bar{x}_{\Delta}(s^-),r(s))|^pds\\
&= 2^{p-1}3^{p-1}\lambda^{p/2}T^{\frac{p-2}{2}}C_2(p)\sum_{k=0}^{n}\mathbb{E}\int_{t_k}^{t_{k+1}}|h(\bar{x}_{\Delta}(t_k),\bar{r}_{\Delta}(t_k))-h(\bar{x}_{\Delta}(t_k),r(s))|^pds\\
&\le 2^{2(p-1)}3^{p-1}\lambda^{p/2}T^{\frac{p-2}{2}}C_2(p)\sum_{k=0}^{n}\mathbb{E}\int_{t_k}^{t_{k+1}}[|h(\bar{x}_{\Delta}(t_k),\bar{r}_{\Delta}(t_k))|^p+|h(\bar{x}_{\Delta}(t_k),r(s))|^p1_{\{r(s)\neq r(t_k)\}}ds\\
&\le 2^{2(p-1)}3^{p-1}\lambda^{p/2}T^{\frac{p-2}{2}}C_2(p)\sum_{k=0}^{n}\int_{t_k}^{t_{k+1}}\mathbb{E}\Big[\mathbb{E}[|h(\bar{x}_{\Delta}(t_k),\bar{r}_{\Delta}(t_k))|^p\\
&+|h(\bar{x}_{\Delta}(t_k),r(s))|^p|r(t_k))] \mathbb{E}[1_{\{r(s)\neq r(t_k)\}}|r(t_k)]\Big]ds,
\end{align*}
where $n$, as usual, is the integer part of $T/\Delta$ with $t_{n+1}$ set to be $T$. By Lemma \ref{eq:L7} and \eqref{markov2}
\begin{align*}
\mathbb{E}\int_{t_k}^{t_{k+1}}|h(\bar{x}_{\Delta}(t_k),\bar{r}_{\Delta}(t_k))-h(\bar{x}_{\Delta}(t_k),r(s))|^pds &\le (\bar{c}_1\Delta+o(\Delta))\int_{t_k}^{t_{k+1}}2\alpha_3(i)\mathbb{E}|\bar{x}_{\Delta}(t_k)|^p ds\\
&\le 2\alpha_3(\bar{c}_1\Delta+o(\Delta))\Delta,
\end{align*}
where $\alpha_3=\max_{i\in \mathcal{S}}\alpha_3(i)$. Consequently, we have
\begin{align*}
\mathbb{E}\int_{0}^{T}|h(\bar{x}_{\Delta}(s^-),\bar{r}_{\Delta}(s))-h(\bar{x}_{\Delta}(s^-),r(s))|^pds
&\le 2\alpha_3(\bar{c}_1\Delta+o(\Delta))
\end{align*}
and then,
\begin{align}\label{eq:pois}
\mathbb{E}\int_{0}^{t \wedge \vartheta_{\Delta,n}}|h(\bar{x}_{\Delta}(s^-),\bar{r}_{\Delta}(s))-h(\bar{x}_{\Delta}(s^-),r(s))|^pds
&\le 2\alpha_3(\bar{c}_1\Delta+o(\Delta)).
\end{align}
We substitute this into $\mathcal{J}_{334}$ to get
\begin{align*}
\mathcal{J}_{334}&\le 2^{2p-1}3^{p-1}C_2(p)\alpha_3\lambda^{p/2}T^{\frac{p-2}{2}}(\bar{c}_1\Delta+o(\Delta)).
\end{align*}
It then follows from $\mathcal{J}_{333}$ and $\mathcal{J}_{334}$ that
\begin{align*}
\mathcal{J}_{331}&\le  2^{2p-1}3^{p-1}C_2(p)\alpha_3\lambda^{p/2}T^{\frac{p-2}{2}}(\bar{c}_1\Delta+o(\Delta))\\
&+ 2^{p-1}3^{p-1}C_2(p)K_n^p\lambda^{p/2}T^{\frac{p-2}{2}}\mathbb{E}\int_{0}^{t\wedge \vartheta_{\Delta,n}}|\bar{x}_{\Delta}(s^-)-x(s^-)|^pds.
\end{align*}
By the the H\"older inequality,
\begin{align*}
\mathcal{J}_{332}&\le 2^{p-1}3^{p-1}\lambda^pT^{p-1}\mathbb{E}\int_{0}^{t \wedge \vartheta_{\Delta,n}}|h(\bar{x}_{\Delta}(s^-),\bar{r}_{\Delta}(s))-h(x(s^-),r(s))|^pds\\
&=\mathcal{J}_{335}+\mathcal{J}_{336},
\end{align*}
where
\begin{align*}
\mathcal{J}_{335}&= 2^{p-1}3^{p-1}\lambda^pT^{p-1}\mathbb{E}\int_{0}^{t\wedge \vartheta_{\Delta,n}}|h(\bar{x}_{\Delta}(s^-),r(s))-h(x(s^-),r(s))|^pds\\
\mathcal{J}_{336}&=  2^{p-1}3^{p-1}\lambda^pT^{p-1}\mathbb{E}\int_{0}^{t\wedge \vartheta_{\Delta,n}}|h(\bar{x}_{\Delta}(s^-),\bar{r}_{\Delta}(s))-h(\bar{x}_{\Delta}(s^-),r(s))|^pds.
\end{align*}
So by Lemma \ref{eq:L2},
\begin{align*}
\mathcal{J}_{335}&= 2^{p-1}3^{p-1}\lambda^pT^{p-1}K_n^p\mathbb{E}\int_{0}^{t\wedge \vartheta_{\Delta,n}}|\bar{x}_{\Delta}(s^-)-x(s^-)|^pds.
\end{align*}
Apparently, we see from \eqref{eq:pois} that
 \begin{align*}
\mathbb{E}\int_{0}^{t \wedge \vartheta_{\Delta,n}}|h(\bar{x}_{\Delta}(s^-),\bar{r}_{\Delta}(s))-h(\bar{x}_{\Delta}(s^-),r(s))|^pds
&\le 2\alpha_3(\bar{c}_1\Delta+o(\Delta)). 
\end{align*}
This implies 
\begin{align*}
\mathcal{J}_{336}&\le 2^{p-1}3^{p-1}2\alpha_3\lambda^pT^{p-1}(\bar{c}_1\Delta+o(\Delta)). 
\end{align*}
We now have from $\mathcal{J}_{335}$ and $\mathcal{J}_{336}$
\begin{align*}
\mathcal{J}_{332}&\le 2^{p-1}3^{p-1}2\alpha_3\lambda^pT^{p-1}(\bar{c}_1\Delta+o(\Delta))\\
&+2^{p-1}3^{p-1}\lambda^pT^{p-1}K_n^p\mathbb{E}\int_{0}^{t\wedge \vartheta_{\Delta,n}}|\bar{x}_{\Delta}(s^-)-x(s^-)|^pds.
\end{align*}
We then combine $\mathcal{J}_{331}$ and $\mathcal{J}_{332}$ to have
\begin{align*}
\mathcal{J}_{33}&\le  \bar{c}_7(\bar{c}_1\Delta+o(\Delta))+\bar{c}_8\mathbb{E}\int_{0}^{t\wedge \vartheta_{\Delta,n}}|\bar{x}_{\Delta}(s^-)-x(s^-)|^pds+\bar{c}_9\mathbb{E}\int_{0}^{t\wedge \vartheta_{\Delta,n}}|\bar{x}_{\Delta}(s^-)-x(s^-)|^pds,
\end{align*}
where
\begin{align*}
\bar{c}_7&=2^{2p-1}3^{p-1}C_2(p)\alpha_3\lambda^{p/2}T^{\frac{p-2}{2}}+2^{p-1}3^{p-1}2\alpha_3\lambda^pT^{p-1}\\
\bar{c}_8&=2^{p-1}3^{p-1}C_2(p)K_n^p\lambda^{p/2}T^{\frac{p-2}{2}}\\
\bar{c}_9&=2^{p-1}3^{p-1}\lambda^pT^{p-1}K_n^p.
\end{align*}
Substituting $\mathcal{J}_{31}$, $\mathcal{J}_{32}$ and $\mathcal{J}_{33}$ into \eqref{eq:result2}, we get
\begin{align*}
&\mathbb{E}\Big( \sup_{0\leq t \leq t_1}|x_{\Delta}(t \wedge \vartheta_{\Delta,n})-x(t \wedge\vartheta_{\Delta,n})|^p\Big)\\
&\le \bar{c}_2(\bar{c}_1\Delta+o(\Delta))(\psi(\Delta))^p+\bar{c}_4(\bar{c}_1\Delta+o(\Delta))+\bar{c}_7(\bar{c}_1\Delta+o(\Delta))\\
&+\bar{c}_5\mathbb{E}\int_{0}^{t_1\wedge \vartheta_{\Delta,n}}|\bar{x}_{\Delta}((s-\tau)^-)-x((s-\tau)^-)|^pds+\bar{c}_6\mathbb{E}\int_{0}^{t_1\wedge \vartheta_{\Delta,n}}|\bar{x}_{\Delta}(s^-)-x(s^-)|^p ds\\
&+\bar{c}_8\mathbb{E}\int_{0}^{t\wedge \vartheta_{\Delta,n}}|\bar{x}_{\Delta}(s^-)-x(s^-)|^pds+\bar{c}_9\mathbb{E}\int_{0}^{t\wedge \vartheta_{\Delta,n}}|\bar{x}_{\Delta}(s^-)-x(s^-)|^pds.
\end{align*}
It then follows that
\begin{align*}
&\mathbb{E}\Big( \sup_{0\leq t \leq t_1}|x_{\Delta}(t \wedge \vartheta_{\Delta,n})-x(t \wedge\vartheta_{\Delta,n})|^p\Big)\\
&\le \bar{c}_{10}(\bar{c}_1\Delta+o(\Delta))(\psi(\Delta))^p+\bar{c}_5\mathbb{E}\int_{-\tau}^{0}|\xi([s/\Delta]\Delta)-\xi(s)|^pds+\bar{c}_{11}\mathbb{E}\int_{0}^{t\wedge \vartheta_{\Delta,n}}|\bar{x}_{\Delta}(s^-)-x(s^-)|^pds,
\end{align*}
where 
\begin{align*}
\bar{c}_{10}&=\bar{c}_2\vee \bar{c}_4\vee \bar{c}_7\\
\bar{c}_{11}&=\bar{c}_5\vee \bar{c}_6\vee \bar{c}_8\vee \bar{c}_9.
\end{align*}
By elementary inequality, Assumption \ref{eq:a3} and Lemma \ref{eq:L6}
\begin{align*}
&\mathbb{E}\Big( \sup_{0\leq t \leq t_1}|x_{\Delta}(t \wedge \vartheta_{\Delta,n})-x(t \wedge\vartheta_{\Delta,n})|^p\Big)\\
&\le \bar{c}_{10}(\bar{c}_1\Delta+o(\Delta))(\psi(\Delta))^p)+\bar{c}_5\Delta^{p\Upsilon}\tau+2^{p-1}\bar{c}_{11}\int_{0}^{T}\mathbb{E}\Big(\mathbb{E}|\bar{x}_{\Delta}(s)-x_{\Delta}(s)|^p\big|\mathcal{F}_{k(s)}\Big)ds\\
&+2^{p-1}\bar{c}_{11}\int_{0}^{t_1}\mathbb{E}\Big(\sup_{0\leq t\leq s}|x_{\Delta}(t\wedge \vartheta{\Delta,n})-x(t\wedge \vartheta{\Delta,n})|^p\Big)ds\\
&\le \bar{c}_{10}(\bar{c}_1\Delta+o(\Delta))(\psi(\Delta))^p)+\bar{c}_5\Delta^{p\Upsilon}\tau+2^{p-1}\bar{c}_{11}\mathfrak{C}_1\Big(\Delta^{p/2}(\psi(\Delta))^p+\Delta\Big)\int_{0}^{T}\mathbb{E}|\bar{x}_{\Delta}(s)|^pds\\
&+2^{p-1}\bar{c}_{11}\int_{0}^{t_1}\mathbb{E}\Big(\sup_{0\leq t\leq s}|x_{\Delta}(t\wedge \vartheta_{\Delta,n})-x(t\wedge \vartheta_{\Delta,n})|^p\Big)ds
\end{align*}
So by Lemma \ref{eq:L7} and noting that $(\Delta^{1/4}(\psi(\Delta)))^{p}\le 1$, we now have
\begin{align*}
&\mathbb{E}\Big( \sup_{0\leq t \leq t_1}|x_{\Delta}(t \wedge \vartheta_{\Delta,n})-x(t \wedge\vartheta_{\Delta,n})|^p\Big)\\
&\le \bar{c}_{10}(\bar{c}_1\Delta+o(\Delta))(\psi(\Delta))^p)+\bar{c}_5\Delta^{p\Upsilon}\tau+2^{p-1}\bar{c}_{11}c_3\mathfrak{C}_1\Big([\Delta^{p/4}(\psi(\Delta))^p]\Delta^{p/4}+\Delta^{p(1/p)}\Big)\\
&+2^{p-1}\bar{c}_{11}\int_{0}^{t_1}\mathbb{E}\Big(\sup_{0\leq t\leq s}|x_{\Delta}(t\wedge \vartheta_{\Delta,n})-x(t\wedge \vartheta_{\Delta,n})|^p\Big)ds\\
&\le \bar{c}_{10}(\bar{c}_1\Delta+o(\Delta))(\psi(\Delta))^p)+ \Big(\bar{c}_5\tau+2^{p-1}\bar{c}_{11}c_3\mathfrak{C}_1(\Delta^{p/4}(\psi(\Delta))^p+1)\Big)\Delta^{p(1/4  \wedge \Upsilon \wedge 1/p)}\\
&+2^{p-1}\bar{c}_{11}\int_{0}^{t_1}\mathbb{E}\Big(\sup_{0\leq t\leq s}|x_{\Delta}(t\wedge \vartheta_{\Delta,n})-x(t\wedge \vartheta_{\Delta,n})|^p\Big)ds\\
&\le \bar{c}_{10}(\bar{c}_1\Delta+o(\Delta))(\psi(\Delta))^p)+ \bar{c}_{12}\Delta^{p(1/4  \wedge \Upsilon \wedge 1/p)}\\
&+\bar{c}_{13}\int_{0}^{t_1}\mathbb{E}\Big(\sup_{0\leq t\leq s}|x_{\Delta}(t\wedge \vartheta_{\Delta,n})-x(t\wedge \vartheta_{\Delta,n})|^p\Big)ds,
\end{align*}
where $\bar{c}_{12}=\bar{c}_5\tau+2^{p}\bar{c}_{11}c_3\mathfrak{C}_1$ and $\bar{c}_{13}=2^{p-1}\bar{c}_{11}$. The Gronwall inequality gives us
\begin{align*}
\mathbb{E}\Big( \sup_{0\leq t \leq t_1}|x_{\Delta}(t \wedge \vartheta_{\Delta,n})-x(t \wedge\vartheta_{\Delta,n})|^p\Big)\le 
\mathcal{C}\Big((\Delta+o(\Delta))(\psi(\Delta))^p) \vee \Delta^{p(1/4\wedge \Upsilon \wedge 1/p)}\Big),
\end{align*}
as the required result in \eqref{eq:33}. where $\mathcal{C}= (\bar{c}_{10}(\bar{c}_1\vee 1)\vee \bar{c}_{12})e^{\bar{c}_{13}}$. By letting $\Delta\rightarrow 0$, we get \eqref{eq:34}.
\end{proof}
The strong convergence theorem of the truncated approximate solutions is as follows.
\begin{theorem}\label{eq:thrm2}
Let Assumptions \ref{eq:a1}, \ref{eq:a3*}, \ref{eq:a2} and \ref{eq:a3} hold. Then for any $p\ge 2$, we have
\begin{equation}\label{eq:42}
\lim_{\Delta\rightarrow 0}\mathbb{E}\Big( \sup_{0\leq t \leq T}|x_{\Delta}(t)-x(t)|^p \Big)=0
\end{equation}
and consequently
\begin{equation}\label{eq:43}
\lim_{\Delta\rightarrow 0}\mathbb{E}\Big( \sup_{0\leq t \leq T}|\bar{x}_{\Delta}(t)-x(t)|^p \Big)=0.
\end{equation}
\end{theorem}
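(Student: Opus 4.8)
The plan is to combine the convergence-up-to-a-stopping-time estimate of Lemma \ref{eq:L9} with the probabilistic control of that stopping time from Lemma \ref{eq:L8}, through a Young-inequality splitting argument. Write $e_\Delta(t)=x_\Delta(t)-x(t)$ and recall $\vartheta_{\Delta,n}=\tau_n\wedge\varsigma_{\Delta,n}$. First I would fix an auxiliary exponent $\bar p$ with $\bar p>p\vee 3\vee(\rho-1)$, so that Lemmas \ref{eq:L1} and \ref{eq:L7} both apply at level $\bar p$; together with the Doob and Burkholder--Davis--Gundy inequalities applied to the integral representations \eqref{eq:4} and \eqref{eq:23}, these yield a constant $C_0$, independent of $\Delta$ and $n$, with
\[
\sup_{0<\Delta\le\Delta^*}\mathbb{E}\Big(\sup_{0\le t\le T}|e_\Delta(t)|^{\bar p}\Big)\le C_0 .
\]
This uniform higher moment is the quantity that will absorb the rare excursions of the solutions outside $(1/n,n)$.

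Next I would decompose, for each fixed $n$,
\[
\mathbb{E}\Big(\sup_{0\le t\le T}|e_\Delta(t)|^{p}\Big)=\mathbb{E}\Big(\sup_{0\le t\le T}|e_\Delta(t)|^{p}1_{\{\vartheta_{\Delta,n}>T\}}\Big)+\mathbb{E}\Big(\sup_{0\le t\le T}|e_\Delta(t)|^{p}1_{\{\vartheta_{\Delta,n}\le T\}}\Big).
\]
On $\{\vartheta_{\Delta,n}>T\}$ one has $t\wedge\vartheta_{\Delta,n}=t$ throughout $[0,T]$, so the first term is dominated by $\mathbb{E}(\sup_{0\le t\le T}|x_\Delta(t\wedge\vartheta_{\Delta,n})-x(t\wedge\vartheta_{\Delta,n})|^{p})$, which is precisely the left-hand side of \eqref{eq:33} and hence tends to $0$ as $\Delta\to0$ for each fixed $n$ by Lemma \ref{eq:L9}. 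For the second term I would apply the Young inequality with conjugate exponents $\bar p/p$ and $\bar p/(\bar p-p)$ to get, for any $\eta>0$,
\[
\mathbb{E}\Big(\sup_{0\le t\le T}|e_\Delta(t)|^{p}1_{\{\vartheta_{\Delta,n}\le T\}}\Big)\le\frac{p\eta}{\bar p}\,C_0+\frac{\bar p-p}{\bar p\,\eta^{p/(\bar p-p)}}\,\mathbb{P}(\vartheta_{\Delta,n}\le T).
\]
Finally I would estimate $\mathbb{P}(\vartheta_{\Delta,n}\le T)\le\mathbb{P}(\tau_n\le T)+\mathbb{P}(\varsigma_{\Delta,n}\le T)$: the first probability is at most $(c_1+c_2)n^{-p}$ by the Markov inequality and the moment bounds of Lemma \ref{eq:L1}, since on $\{\tau_n\le T\}$ the exact solution must hit $n$ or $1/n$, while the second is controlled directly by \eqref{eq:27} in Lemma \ref{eq:L8}.

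The conclusion \eqref{eq:42} then follows by the usual three-step choice of constants, carried out in a strict order. Given $\epsilon\in(0,1)$, first choose $\eta$ so small that $\frac{p\eta}{\bar p}C_0<\epsilon/3$; then, with $\eta$ frozen, choose $n$ large enough that $\frac{\bar p-p}{\bar p}\eta^{-p/(\bar p-p)}(c_1+c_2)n^{-p}<\epsilon/6$ and, through Lemma \ref{eq:L8}, that $\frac{\bar p-p}{\bar p}\eta^{-p/(\bar p-p)}\mathbb{P}(\varsigma_{\Delta,n}\le T)<\epsilon/6$ for all $\Delta\le\bar\Delta(\epsilon)$; and finally, with $n$ fixed, invoke Lemma \ref{eq:L9} to take $\Delta$ small enough that the good-event term falls below $\epsilon/3$. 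As $\epsilon$ is arbitrary this gives \eqref{eq:42}. For the consequence \eqref{eq:43} I would use $|\bar x_\Delta(t)-x(t)|^p\le 2^{p-1}(|\bar x_\Delta(t)-x_\Delta(t)|^p+|x_\Delta(t)-x(t)|^p)$; the second summand vanishes in the limit by \eqref{eq:42}, and the first is handled by the closeness estimate \eqref{eq:25} of Lemma \ref{eq:L6} together with the numerical moment bound \eqref{eq:26}, since $\Delta^{p/2}(\psi(\Delta))^p+\Delta\to0$.

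The main obstacle I anticipate is the bookkeeping in the order of quantifiers: the parameter $\eta$ in the Young split must be fixed before $n$, because the coefficient $\eta^{-p/(\bar p-p)}$ multiplying $\mathbb{P}(\vartheta_{\Delta,n}\le T)$ would otherwise blow up, and the same $n$ must simultaneously make $\mathbb{P}(\tau_n\le T)$ small through the exact moment bound and be admissible for Lemma \ref{eq:L8}; the constant $\mathcal{C}$ in \eqref{eq:33} depends on $n$ via $K_n,L_n,\mu(n)$, which is why $\Delta$ is sent to $0$ only after $n$ has been chosen. A secondary technical point is securing the uniform bound $C_0$ with the supremum inside the expectation at the higher level $\bar p$, which the stated lemmas provide only with the supremum outside; this costs one further application of the martingale maximal inequalities but introduces no essentially new difficulty.
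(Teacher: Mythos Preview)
Your proposal is correct and follows essentially the same route as the paper: both split on the event $\{\vartheta_{\Delta,n}>T\}$, trade a higher moment of $e_\Delta$ against $\mathbb{P}(\vartheta_{\Delta,n}\le T)$ via Young's inequality, invoke Lemma \ref{eq:L9} on the good event, and then fix the Young parameter, $n$, and $\Delta$ in that order. The only cosmetic differences are that the paper takes the higher exponent to be exactly $2p$ (first restricting to $p\ge 3$ and recovering $p\in[2,3)$ by H\"older) and bounds $\mathbb{P}(\tau_n\le T)$ through the Lyapunov estimate in Lemma \ref{eq:L0} rather than through Markov's inequality and Lemma \ref{eq:L1}; your remark about needing the supremum inside the expectation is apt, as the paper glosses over precisely this point at \eqref{eq:45}.
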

\begin{proof}
Here, we only prove the theorem for $ p\ge 3$. As for $p\in [2,3)$, it follows directly from the case of $p=3$ and the H\"older inequality. Let $\tau_n$, $\varsigma_{\Delta,n}$ and $\vartheta_{\Delta,n}$, be the same as before. Set
\begin{equation*}
e_{\Delta}(t)=x_{\Delta}(t)-x(t).
\end{equation*}
For any arbitrarily $ \delta >0$, the Young inequality gives us
\begin{align}\label{eq:44}
\mathbb{E}\Big(\sup_{0\leq t \leq T}|e_{\Delta}(t)|^p\Big)&=\mathbb{E}\Big(\sup_{0\leq t \leq T}|e_{\Delta}(t)|^p1_{\{\tau_n>T \text{ and }\varsigma_{\Delta,n}>T\}}\Big)+\mathbb{E}\Big(\sup_{0\leq t \leq T}|e_{\Delta}(t)|^p1_{\{\tau_n\le T \text{ or }\varsigma_{\Delta,n}\le T\}}\Big)\nonumber\\
&\leq \mathbb{E}\Big(\sup_{0\leq t \leq T}|e_{\Delta}(t)|^p1_{\{\vartheta_{\Delta,n}>T\}}\Big)+\frac{\delta}{2}\mathbb{E}\Big(\sup_{0\leq t \leq T}|e_{\Delta}(t)|^{2p}\Big)\nonumber\\
&+\frac{1}{2\delta}\mathbb{P}(\tau_n\leq T \text{ or }\varsigma_{\Delta,n} \leq T).
\end{align}
So for $ p\ge 3$, Lemmas \ref{eq:L1} and \ref{eq:L7} give us
\begin{align}\label{eq:45}
 \mathbb{E}\Big(\sup_{0\leq t \leq T} |e_{\Delta}(t)|^{2p}\Big)&\leq 2^{2p}\mathbb{E}\Big(\sup_{0\leq t \leq T}|x(t)|^{2p}\vee \sup_{0\leq t \leq T}|x_{\Delta}(t)|^{2p}\Big)
 \nonumber\\
&\le 2^{2p}(c_1\vee c_3)^2.
\end{align}
By Lemmas \ref{eq:L0} and \ref{eq:L8},
\begin{equation}\label{eq:46}
\mathbb{P}(\vartheta_{\Delta,n}\leq T)\leq \mathbb{P}(\tau_n\leq T) +\mathbb{P}(\varsigma_{\Delta,n}\leq T).
\end{equation}
Also by Lemma \ref{eq:L9},
\begin{equation}\label{eq:47}
\mathbb{E}\Big(\sup_{0\leq t \leq T}|e_{\Delta}(t)|^p1_{\{\vartheta_{\Delta,n}>T\}}\Big)\leq \mathcal{C}\Big((\Delta+o(\Delta))(\psi(\Delta))^p) \vee \Delta^{p(1/4\wedge \Upsilon \wedge 1/p)}\Big).
\end{equation}
Substituting \eqref{eq:45}, \eqref{eq:46} and \eqref{eq:47} into \eqref{eq:44} yields
\begin{align*}
\mathbb{E}\Big(\sup_{0\leq t \leq T}|e_{\Delta}(t)|^p\Big)&\leq \frac{2^{2p}(c_1\vee c_3)^2\delta}{2}+\mathcal{C}\Big((\Delta+o(\Delta))(\psi(\Delta))^p) \vee \Delta^{p(1/4\wedge \Upsilon \wedge 1/p)}\Big)\\
&+\frac{1}{2\delta}\mathbb{P}(\tau_n\leq T)+\frac{1}{2\delta}\mathbb{P}(\varsigma_{\Delta,n}\leq T).
\end{align*}
Given $\epsilon\in (0,1)$, we can select $\delta$ so that
\begin{equation}\label{eq:48}
\frac{2^{2p}(c_1\vee c_3)^2\delta}{2} \leq\frac{\epsilon}{4}.
\end{equation}
Similarly, for any given $\epsilon\in (0,1)$, there exists $n_o$ so that for $n\geq n_o$, we may select $\delta$ to have
\begin{equation}\label{eq:49}
\frac{1}{2\delta}\mathbb{P}(\tau_n\leq T)\leq \frac{\epsilon}{4}
\end{equation}
and select $n(\epsilon)\leq n_o$ such that for $\Delta\in (0,\bar{\Delta}]$
\begin{equation}\label{eq:50}
\frac{1}{2\delta}\mathbb{P}(\varsigma_{\Delta,n}\leq T)\leq \frac{\epsilon}{4}.
\end{equation}
Finally, we may select $\Delta\in (0,\bar{\Delta}]$ sufficiently small for $\epsilon\in (0,1)$ such that
\begin{equation}\label{eq:51}
\mathcal{C}\Big((\Delta+o(\Delta))(\psi(\Delta))^p) \vee \Delta^{p(1/4\wedge \Upsilon \wedge 1/p)}\Big)\le \frac{\epsilon}{4}.
\end{equation}
Combining \eqref{eq:48}, \eqref{eq:49}, \eqref{eq:50} and \eqref{eq:51}, we get
\begin{align*}
\mathbb{E}\Big(\sup_{0\leq t \leq T}|x_{\Delta}(t)-x(t)|^p\Big)\leq \epsilon.
\end{align*}
as the required result in \eqref{eq:42}. By Lemma \ref{eq:L6}, we also get \eqref{eq:43} by setting $\Delta\rightarrow 0$.
\end{proof}
\newpage
\section{Numerical simulations}
Let us now implement the truncated EM (TEM) scheme for \textup{SDDE} \eqref{eq:2}. To illustrate the strong result established in Theorem \ref{eq:thrm2}, we compare the scheme with the backward EM (BEM) scheme. For justification regarding the choice of BEM scheme and its limitation, we refer the reader to consult \cite{emma}. Now consider the following form of \textup{SDDE} \eqref{eq:2}
\begin{align}\label{eq:s1}
dx(t)&=f(x(t^-),r(t))dt+\varphi(x((t-\tau)^-)),r(t))g(x(t^-))dB(t)+h(x(t^-),r(t))dN(t),
\end{align}
on $t\ge 0$ with initial values $\xi=0.02$ and $r_0=1$, where $r(t)$ is a Markovian chain defined on the state $\mathcal{S}=\{1,2\}$ with the generator given by
\begin{equation}
\Gamma=(\gamma)_{2\times 2}=
\begin{pmatrix}
-2&2\\
1&-1
\end{pmatrix}.
\end{equation}
Moreover, let
\begin{align}
f(x,i)&=
\begin{cases}
  0.3x^{-1}-0.2+0.1x-0.5x^{2},&\mbox{if $i=1$ }\\
  0.2x^{-1}-0.3+0.2x-0.6x^{2},& \mbox{if $i=2$,}
\end{cases}
\end{align}
\begin{align}
g(x)&=
\begin{cases}
  g(x)=x^{5/4}\\
  g(x)=x^{5/4}
\end{cases}
\end{align}
and
\begin{align}
h(x,i)&=
\begin{cases}
  h(x)=x,&\mbox{if $i=1$ }\\
  h(x)=2x,& \mbox{if $i=2$},
\end{cases}
\end{align}
$\forall(x,i)\in (\mathbb{R}\times \mathcal{S})$. The volatility process $\varphi(\cdot,\cdot)$ is a sigmoid-type function defined as follows:
\\\\
for $i=1$,
\begin{equation*}
\varphi(y,i)=
\begin{cases}
  \frac{1}{2}\frac{(1+(e^{y}-e^{-y}))}{(e^{y}+e^{-y})},&\mbox{if $y\geq 0$ }\\
  \frac{1}{4},&\mbox{Otherwise},
\end{cases}
\end{equation*}
and for $i=2$,
\begin{equation}
\varphi(y,i)=
\begin{cases}
  \frac{1}{4}\frac{(1+(e^{y}-e^{-y}))}{(e^{y}+e^{-y})}, & \mbox{if $y\geq 0$ }\\
  \frac{1}{8},                             & \mbox{Otherwise},
\end{cases}
\end{equation}
$\forall(y,i)\in (\mathbb{R}\times \mathcal{S})$. Obviously, all the assumptions imposed on $\varphi(\cdot,\cdot)$ are met(see \cite{emma}). We clearly see
\begin{equation*}
\sup_{1/u \leq x\leq u}(|f(x,i)|\vee g(x))\leq 3u^2,\quad u\ge 1,
\end{equation*}
We can now set $\mu=3u^2$ with inverse $\mu^{-1}(u)=(u/3)^{1/2}$.
\subsection{Numerical results}
By selecting $\psi(\Delta)=\Delta^{-2/3}$ and step size $10^{-3}$, we obtain Monte Carlo simulated sample path of $x(t)$ to \textup{SDDE} \eqref{eq:s1} at terminal time $T$ in Figure \ref{Fig:figure1} using the TEM scheme. The strong convergence between TEM and BEM numerical solutions is shown in Figure \ref{Fig:figure2}. In Figure \ref{Fig:figure3}, we observe the strong order to be approximately one half although this result is not yet proved theoretically. Do note that Figure \ref{Fig:figure2} and Figure \ref{Fig:figure3} were obtained without the $x^{-1}(t)$ drift term (see \cite{emma}). 
\begin{figure}[!htbp]
  \centerline{\includegraphics[scale=1]{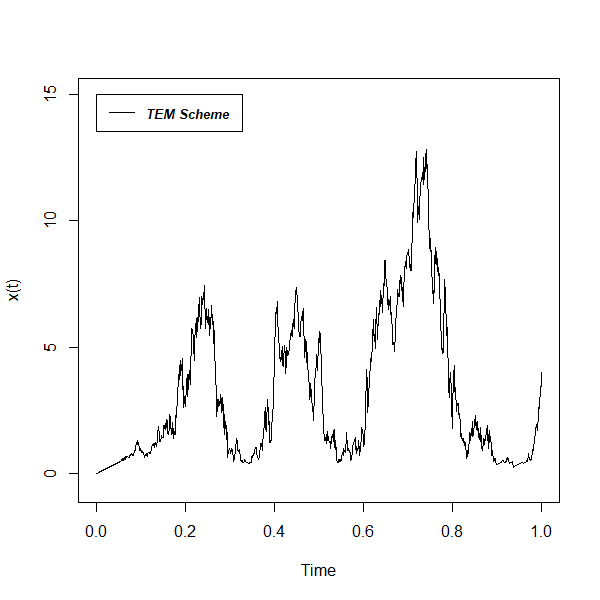}}
  \caption{Simulated sample path of $x(t)$ using $\Delta=0.001$}
  \label{Fig:figure1}
\end{figure}
\begin{figure}[!htbp]
  \centerline{\includegraphics[scale=1]{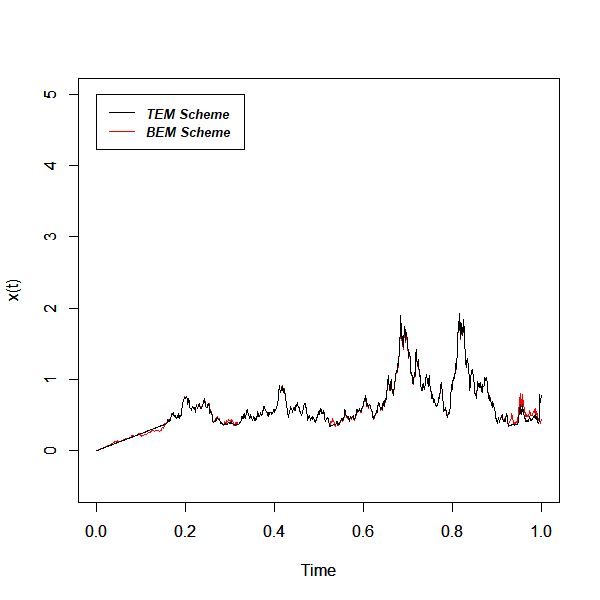}}
  \caption{Convergence of TEM and BEM solutions using $\Delta=0.001$}
  \label{Fig:figure2}
\end{figure}
\begin{figure}[!htbp]
  \centerline{\includegraphics[scale=1]{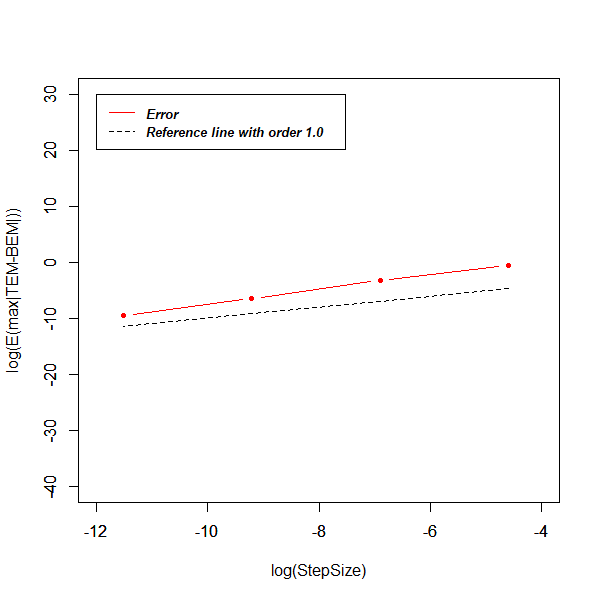}}
  \caption{Strong errors between TEM and BEM schemes}
  \label{Fig:figure3}
\end{figure}
\newpage
\section{Applications in finance}
In this section, we justify Theorem \ref{eq:thrm2} for Monte Carlo valuation of a bond and a barrier option.
\subsection{A bond}
\noindent Suppose the short-term interest rate is governed by \textup{SDDE} \eqref{eq:1}. Then a bond price $B$ at maturity time $T$ is given by
\begin{equation}\label{eq:52}
  B(T)=\mathbb{E}\Big[ \exp \Big(-\int_{0}^{T}x(t)dt \Big)\Big].
\end{equation}
The approximate value of \eqref{eq:52} using the step function \eqref{eq:22} is computed by
\begin{equation*}
  B_{\Delta}(T)=\mathbb{E}\Big[ \exp \Big(-\int_{0}^{T}\bar{x}_{\Delta}(t)dt \Big)\Big].
\end{equation*}
By Theorem \ref{eq:thrm2}, we get
\begin{equation*}
  \lim_{\Delta\rightarrow 0}|B_{\Delta}(T)-B(T)|=0.
\end{equation*}
\subsection{\textbf{A barrier option}}
Consider the payoff of a path-dependent barrier option at an expiry date $T$ defined by
\begin{equation}\label{eq:70}
  P(T)=\mathbb{E}\Big[ (x(T)-\Lambda)^+1_{\sup_{0\leq t\leq T }}x(t)<\mathbf{B})\Big].
\end{equation}
where the barrier, $\mathbf{B}$, and exercise price, $\Lambda$, are constants. Then the approximate value of \eqref{eq:70} using \eqref{eq:22} is computed by
\begin{equation*}
  P_{\Delta}(T)=\mathbb{E}\Big[ (\bar{x}_{\Delta}(T)-\Lambda)^+1_{\sup_{0\leq t\leq T }}\bar{x}_{\Delta}(t)<\mathbf{B})\Big].
\end{equation*}
So by Theorem \ref{eq:thrm2}, we also get
\begin{equation*}
  \lim_{\Delta\rightarrow 0}|P_{\Delta}(T)-P(T)|=0.
\end{equation*}
The reader is referred to \cite{highamao} for detailed coverage.
\section*{\textit{Acknowledgements}}
The author would like to express his sincere gratitude to his supervisor, Prof. Mao Xuerong and also thank University of Strathclyde for the doctoral scholarship. 

\end{document}